\newtheorem{proposition}{Proposition}[section]
\newtheorem{thm}{Theorem}[section]
\newtheorem{definition}{Definition}[section]
\newtheorem{example}{Example}
\begin{document}

\begin{center}
{\large \sc \bf The (3+1)-dimensional dispersionless integrable hierarchy and nonlinear Riemann-Hilbert  problem associated with the Doubrov-Ferapontov modified heavenly equation  }

\vskip 20pt

{Ge Yi, Bowen Sun, Kelei Tian* and Ying Xu \\

\it
 Hefei University of Technology, Hefei, Anhui 230601, China}

\bigskip

\bigskip
$^*$ Corresponding author:  {\tt kltian@ustc.edu.cn, kltian@hfut.edu.cn}
\bigskip

\bigskip

{\today}

\end{center}

\bigskip
\bigskip
\textbf{Abstract:} According to the classification of integrable complex Monge–Ampère equations by Doubrov and Ferapontov,  the modified heavenly equation is a typical (3+1)-dimensional dispersionless and canonical integrable equation.
In this paper  we use the eigenfunctions of the Doubrov-Ferapontov modified heavenly equation to obtain a related hierarchy. 
Next we construct the Lax-Sato equations with Hamiltonian vector fields and Zakharov-Shabat type equations which are equivalent to the hierarchy.
The nonlinear Riemann-Hilbert  problem is also applied to study the solution of Doubrov-Ferapontov modified heavenly equation. 
\bigskip

\textit{\textbf{Keywords:}} Doubrov-Ferapontov modified heavenly equation,  Lax-Sato equations, Hamiltonian vector fields, Zakharov-Shabat type equations, nonlinear Riemann-Hilbert problem
\bigskip
\bigskip

\section{\sc \bf Introduction}
The study of integrable systems has a rich historical background. Integrable systems represent a significant research area within mathematical physics, intimately connected with both algebra and geometry. These include equations like  KdV (Korteweg-de Vries) equation,  KP (Kadomtsev-Petviashvili) equation,  Sine-Gordon equation and others.
The aforementioned integrable systems are commonly known as classical integrable systems. Another category of integrable systems, referred to as dispersionless integrable systems, encompasses equations such as the  dKP (dispersionless Kadomtsev-Petviashvili) equation, dDS (dispersionless Davey-Stewartson)  equation, the Pleba\'{n}ski second heavenly equation, Dunajski equation and others.  Research on these dispersionless integrable systems has captured the interest of many mathematicians and physicists.
According to the theory of Zakharov and Shabat, dispersionless integrable systems can be expressed through the commutation of Lax pairs of vector fields, enabling the construction of integrable systems in arbitrary dimensions, which distinguishes them from classical integrable systems\cite{Ref1979}.

The integrability of dispersionless integrable systems manifests through an infinite number of symmetries, which holds significant meaning throughout the entirety of the research, as defined by the hierarchy.
Takasaki,  who constructs the hyper-K$\ddot{a}$hler hierarchy, introduces a series of hidden independent variables which play the role of the time variables.  Apart from that, he also finds that the original field equations for 
 hyper-K$\ddot{a}$hler metric can be extended  to the hierarchy\cite{Ref1989,Ref1990}.
Then, many (2+1)-dimensional dispersionless integrable hierarchies have been studied, such as celebrated dKP (dispersionless Kadomtsev-Petviashvili) hierarchy \cite{Ref1995}, 
the Manakov-Santini hierarchy \cite{Ref2007}
and  dDS (dispersionless  Davey-Stewartson) hierarchy \cite{Ref2020,Ref2022}.
Above hierarchies are derived from (2+1)-dimensional integrable systems.
The (3+1)-dimensional dispersionless integrable systems can also construct the similar hierarchy.
Bogdanov and Konopelchenko develop the Pleba\'{n}ski second heavenly  hierarchy \cite{Ref2006}.
They consider the formal Laurent series 
$$
\begin{aligned}
&S^{1}=\sum_{n=0}^{\infty}t_{n}^{1}z^{n}+\sum_{n=1}^{\infty}S^{1}_{n}({\bf t}^{\bf 1},{\bf t}^{\bf2})z^{-n},
\\&S^{2}=\sum_{n=0}^{\infty}t_{n}^{2}z^{n}+\sum_{n=1}^{\infty}S^{2}_{n}({\bf t}^{\bf 1},{\bf t}^{\bf 2})z^{-n}.
\end{aligned}
$$
These formal series are eigenfunctions of the Lax pair of the Pleba\'{n}ski second heavenly equation.
They give the exterior differential forms of definition of the Pleba\'{n}ski second heavenly  hierarchy as 
$$(dS^{1}\wedge dS^{2})_{-}=0$$
and they also get the Lax-Sato form of the Pleba\'{n}ski second heavenly  hierarchy, which is defined as
$$
\begin{aligned}
&\partial_{n}^{1}\mathcal{S} =-\{(z^{n}S^{2})_{+},\mathcal{S}\},
\\&\partial_{n}^{2}\mathcal{S}=\{(z^{n}S^{1})_{+},\mathcal{S}\},
\\&\{S^{1},S^{2}\}=1,
\end{aligned}
$$
where $\{A,B\}=A_{x}B_{y}-A_{y}B_{x}$, ${\bf t}^{\bf 1}=(t_{0}^{1},t_{1}^{1},\dots,t_{n}^{1},\dots),  {\bf t}^{\bf 2}=(t_{0}^{2},t_{1}^{2},\dots,t_{n}^{2},\dots)$. Apart from that, they  introduce the projectors $\left(\sum\limits_{-\infty}^{\infty}u_{n}z^{n}\right)_{+}=\sum\limits_{n=0}^{\infty}u_{n}z^{n}$, $\left(\sum\limits_{-\infty}^{\infty}u_{n}z^{n}\right)_{-}=\sum\limits_{-\infty}^{n=-1}u_{n}z^{n}$.
These two definitions are equivalent and the Lax-Sato equations have Hamiltonian vector fields structure.
Besides,   Bogdanov,  Dryuma and  Manakov construct the Dunajski hierarchy by using the same method\cite{Ref2007a}.

In this paper,  we focus on the Doubrov-Ferapontov modified heavenly equation 
\begin{equation}
u_{zt}+u_{zx}u_{xy}-u_{yz}u_{xx}=0,
\end{equation}
which is a typical (3+1)-dimensional dispersionless integrable equation.
The equation (1) arises from the Lax pair
\begin{equation}
  \begin{split}
  T&=\partial_{t}+(u_{xy}-\lambda)\partial_{x}-u_{xx}\partial_{y},
  \\Z&=\lambda\partial_{z}-u_{yz}\partial_{x}+u_{zx}\partial_{y}.
  \end{split}
  \end{equation}
Doubrov and Ferapontov classify the Monge–Ampère type equations. Under their classification, equation (1) is  canonical\cite{Ref2010,Ref2014}.
Sheftel and Yazıcı have shown that the equation (1) is a particular case of their asymmetric heavenly equation
$$u_{zx}u_{xy}-u_{zy}u_{xx}+au_{xt}+bu_{zt}+cu_{zz}=0$$ at $a=c=0, b=1$.  They also get the anti-self-dual vacuum metric governed by solutions
of equation (1) which has the signature $(+,+,-,-)$. Apart from that, they also give symmetries admitted by the cubic solution of equation (1)\cite{Ref2014a,Ref2011}.
  Marvan and Sergyeyev present a new approach to construction of recursion operators for equation (1), where the recursion operators can generate an infinite hierarchy of symmetries \cite{Ref2012}.

The hierarchy linked to the equation (1) can be established through the definition of exterior differentials, akin to the second heavenly equation. This process yields the Lax-Sato equations, providing an equivalent definition for the hierarchy.  Apart from that, the Lax pair can be expressed in the following Hamiltonian vector fields form as
\begin{equation}
  \begin{split}
  T&=\partial_{t}-\{-\lambda y +u_{x},\cdot \},
  \\Z&=\lambda\partial_{z}-\{u_{z},\cdot \}.
  \end{split}
  \end{equation}
Therefore, in this article, we will also discuss the Hamiltonian vector fields of the  hierarchy and the Zakharov-Shabat type equations.

In the field related to dispersionless integrable equations, there exists  IST (inverse scattering transform)  named Manakov-Santini method.
Due to the fact that space of the eigenfunctions of the vector fields is a ring,  the inverse problem is essentially nonlinear and this nonlinear problem  can be formulated as a nonlinear Riemann-Hilbert problem
on a suitable contour of the complex plane of the spectral parameter\cite{Ref2006a,Ref2006b}. 
The novel IST has been applied for solving the dispersionless integrable equations, such as dKP equation\cite{Ref2006b}, dDS system\cite{Ref2020}, second heavenly equation\cite{Ref2006a,Ref2009} and Dunajski equations\cite{Ref2015}. Linking dispersionless integrable equations to the nonlinear Riemann-Hilbert problem is the first step in applying the IST, which is also the focus of our study in this article.

In section 2, we  construct a related hierarchy of Doubrov-Ferapontov modified heavenly equation.
In section 3, we  construct the Hamiltonian vector fields associated with the hierarchy and obtain Zakharov-Shabat type equations which are equivalent to the hierarchy.
In section 4, we  construct the pre-reduced hierarchy.
In section 5, we construct the related nonlinear Riemann-Hilbert problem for Doubrov-Ferapontov modified heavenly equation  with reality constraint and heavenly constraint.

\bigskip

\section{\sc \bf  The hierarchy associated with the Doubrov-Ferapontov modified heavenly equation}

The Lax pairs of the dispersionless integrable system  consists of vector fields.
The eigenfunctions space forms a ring (not only the sum, but also the product of eigenfunctions is also the eigenfunction) and its product is finite. 
Therefore Lax pair (2) have three eigenfunctions.
However, 
$\tilde{\phi_{0}}=\lambda$ is the simplest eigenfunction, so we only need to consider other two eigenfunctions $\tilde{\phi_{1}}$ and $\tilde{\phi_{2}}$.
 These  eigenfunctions are defined as the following formal Laurent series
\begin{align*}
&\tilde{\phi_{1}} =-y+\sum_{k=1}^{\infty}\frac{g_{k}}{\lambda^{k}},\\
&\tilde{\phi_{2}} =x+t\lambda +\sum_{m=1}^{\infty}\frac{f_{m}}{\lambda^{m}},
\end{align*}
where   
$\overset{\rightharpoonup }{\tilde{\phi} }=
\begin{gathered}
\begin{pmatrix}\tilde{\phi_{1}}  \\ \tilde{\phi_{2}} 
\end{pmatrix}
\end{gathered}, f_{m}=f_{m}(x,y,z,t), g_{k}=g_{k}(x,y,z,t)$.

From $T(\overset{\rightharpoonup }{\tilde{\phi} })=\overset{\rightharpoonup }{0}$ and $Z(\overset{\rightharpoonup }{\tilde{\phi}})=\overset{\rightharpoonup }{0}$,   we  compare the coefficients in front of $\lambda$ and get the following recursive relations as
$$g_{1}=u_{x}, g_{2}=u_{t},\dots,g_{k}=\partial_{z}^{-1}(u_{yz}g_{k-1,x}-u_{zx}g_{k-1,y}),\cdots$$
$$f_{1}=u_{y},f_{2}=\partial_{x}^{-1}(-u_{xx}f_{1,y}+u_{xy}f_{1,x}+f_{1,t}),\dots,f_{m}=\partial^{-1}_{x}(f_{m-1,t}+u_{xy}f_{m-1,x}-u_{xx}f_{m-1,y}),\cdots$$
To construct the hierarchy, we generalize the preceding  eigenfunctions as
\begin{align*}
&\phi_{1}=-y+\sum_{j=2}^{\infty}z_{j}\lambda^{j-1}+\sum_{k=1}^{\infty}\frac{g_{k}}{\lambda^{k}},\\
&\phi_{2}=x+t\lambda +\sum_{n=2}^{\infty}t_{n}\lambda^{n}+\sum_{m=1}^{\infty}\frac{f_{m}}{\lambda^{m}},
\end{align*}
where $\textbf{t}=(t_{0},t_{1},...,t_{n},...),$  $\textbf{z}=(z_{0},z_{1},..., z_{j},...),$   
$\overset{\rightharpoonup }{\phi} =
\begin{gathered}
\begin{pmatrix}\phi_{1}  \\ \phi_{2} 
\end{pmatrix}
\end{gathered}$,  
$f_{m}=f_{m}(\textbf{t},\textbf{z}),  g_{k}=g_{k}(\textbf{t},\textbf{z})$. In particular, $t_{0}=x, t_{1}=t, z_{0}=y, z_{1}=z.$

\begin{definition}
The  Doubrov-Ferapontov modified heavenly  hierarchy is defined as
\begin{equation}
\left(\Phi_{1}\wedge \Phi_{2}\right)_{-}=0,
\end{equation}
where 
\begin{equation}
\Phi=\phi_{x}dx+\phi_{y}dy+\sum_{n=1}^{\infty}\phi_{t_{n}}dt_{n}+\lambda \sum_{j=1}^{\infty}\phi_{z_{j}}dz_{j}.
\end{equation}
\end{definition}
According to the identities (4) and (5), we only need to consider the coefficients of positive powers of $\lambda$. Therefore we get the following 2-form exterior differential as
\begin{align}
  \nonumber
&(\Phi_{1}\wedge \Phi_{2})_{+}
  \nonumber
\\&=\left(\frac{\partial(\phi_{1},\phi_{2})}{\partial(x,y)}\right)_{+}dx\wedge dy+\sum_{n=1}^{\infty}
  \left(\frac{\partial(\phi_{1},\phi_{2})}{\partial(t_{n},x)}\right)_{+}dt_{n}\wedge dx
+\sum_{n=1}^{\infty}\left(\frac{\partial(\phi_{1},\phi_{2})}{\partial(t_{n},y)}\right)_{+}dt_{n}\wedge dy
\nonumber
\\&+\sum_{j=1}^{\infty}\left(\frac{\lambda \partial(\phi_{1},\phi_{2})}{\partial(z_{j},x)}\right)_{+}dz_{j}\wedge dx
+\sum_{j=1}^{\infty}\left(\frac{\lambda \partial(\phi_{1},\phi_{2})}{\partial(z_{j},y)}\right)_{+}dz_{j}\wedge dy
\nonumber
\\&+\sum_{n=1}^{\infty}\sum_{n \textless s}\left(\frac{\partial(\phi_{1},\phi_{2})}{\partial(t_{n},t_{s})}\right)_{+}dt_{n}\wedge dt_{s}
\nonumber
+\sum_{n=1}^{\infty}\sum_{j=1}^{\infty}\left(\frac{\lambda \partial(\phi_{1},\phi_{2})}{\partial(t_{n},z_{j})}\right)_{+}dt_{n}\wedge dz_{j}
\\&+\sum_{j=1}^{\infty}\sum_{j\textless i}\left(\frac{\lambda^{2} \partial(\phi_{1},\phi_{2})}{\partial(z_{j},z_{i})}\right)_{+}dz_{j}\wedge dz_{i}.
\end{align}
Then we introduce the matrix
\begin{align*}
J=
\begin{gathered}
\begin{pmatrix}
\phi_{1,x}& \phi_{2,x} \\ \phi_{1,y}&\phi_{2,y}
\end{pmatrix}
\end{gathered}
\end{align*}
and we find that $detJ$ is equivalent to the coefficient of $dx\wedge dy$.
Thus we can calculate that
\begin{align*}
\nonumber
detJ&=
\begin{vmatrix}
\phi_{1,x}&\phi_{2,x}\\
\phi_{1,y}&\phi_{2,y}
\end{vmatrix}_{+}
\nonumber
\\&=\begin{vmatrix}
\sum\limits_{k=1}^{\infty}g_{k,x}\lambda^{-k}&1+\sum\limits_{m=1}^{\infty}f_{m,x}\lambda^{-m}\\
-1+\sum\limits_{k=1}^{\infty}g_{k,x}\lambda^{-k}&\sum\limits_{m=1}^{\infty}f_{m,y}\lambda^{-m}
\end{vmatrix}_{+}
\nonumber
\\&=1
\\&=\begin{vmatrix}
\phi_{1,x}&\phi_{1,y}\\
\phi_{2,x}&\phi_{2,y}
\end{vmatrix}_{+}
=\left(\frac{\partial(\phi_{1},\phi_{2})}{\partial(x,y)}\right)_{+}
\end{align*}
and  also get the adjoint matrix as
\begin{align*}
J^{*}=
\begin{gathered}
\begin{pmatrix}
\phi_{2,y}& -\phi_{2,x} \\ -\phi_{1,y}&\phi_{1,x}
\end{pmatrix}
\end{gathered}.
\end{align*}
Next we will consider the coefficients in front of $dt_{n}\wedge dx,dt_{n}\wedge dy,dz_{j}\wedge dx$ and $dz_{j}\wedge dy$.
We will construct a series of operators using these coefficients, which satisfy commutation conditions, and form a hierarchy.
In first situation, 

\begin{align*}
\begin{gathered}
    \begin{pmatrix}
    \phi_{1,t_{n}} ,& \phi_{2,t_{n}}
    \end{pmatrix}
\end{gathered}J^{*}&=
\begin{gathered}
\begin{pmatrix}
 \phi_{1,t_{n}} ,&  \phi_{2,t_{n}}
\end{pmatrix}
\end{gathered}
\begin{gathered}
  \begin{pmatrix}
  \phi_{2,y}& -\phi_{2,x} \\ -\phi_{1,y}&\phi_{1,x}
  \end{pmatrix}
\end{gathered}
=
\begin{gathered}
\begin{pmatrix}
X_{n},&Y_{n}
\end{pmatrix}
\end{gathered}
\end{align*}
so 
 \begin{align}
 \begin{gathered}
  \begin{pmatrix}
   \phi_{1,t_{n}} ,&  \phi_{2,t_{n}}
  \end{pmatrix}
\end{gathered}&=
\begin{gathered}
  \begin{pmatrix}
  X_{n},&Y_{n}
  \end{pmatrix}
  \end{gathered}
  \begin{gathered}
    \begin{pmatrix}
    \phi_{1,x}& \phi_{2,x} \\ \phi_{1,y}&\phi_{2,y}
    \end{pmatrix}
  \end{gathered}
=\begin{gathered}
  \begin{pmatrix}
  X_{n},&Y_{n}
  \end{pmatrix}
  \end{gathered}J.
\end{align}
Therefore, according to the identity (7), we can get
\begin{equation}
\frac{\partial \overset{\rightharpoonup }{\phi}}{\partial{t_{n}}}= (X_{n}\partial_{x}+Y_{n}\partial_{y})( \overset{\rightharpoonup }{\phi}).  
\end{equation}
Due to identities (4) and (5),  the negative power of $\lambda$ vanishes, so
\begin{align*}
&X_{n}=
\begin{vmatrix}\phi_{1,t_{n}}& \phi_{2,t_{n}} \\ \phi_{1,y}&\phi_{2,y}
\end{vmatrix}
=
\begin{vmatrix}\phi_{1,t_{n}}& \phi_{2,t_{n}} \\ \phi_{1,y}&\phi_{2,y}
\end{vmatrix}_{+}
=\left(\frac{\partial(\phi_{1},\phi_{2})}{\partial(t_{n},y)}\right)_{+},
\\&Y_{n}=\begin{vmatrix}\phi_{2,t_{n}}& \phi_{1,t_{n}} \\ \phi_{2,x}&\phi_{1,x}\end{vmatrix}=\begin{vmatrix}\phi_{2,t_{n}}& \phi_{1,t_{n}} \\ \phi_{2,x}&\phi_{1,x}\end{vmatrix}_{+}
=\left(-\frac{\partial(\phi_{1},\phi_{2})}{\partial(t_{n},x)}\right)_{+}.
\end{align*}
Therefore
\begin{equation*}
\frac{\partial  \overset{\rightharpoonup }{\phi}}{\partial{t_{n}}}=
\left(
\begin{gathered}
\begin{vmatrix}\phi_{1,t_{n}}& \phi_{2,t_{n}} \\ \phi_{1,y}&\phi_{2,y}
\end{vmatrix}
\end{gathered}_{+}\partial_{x}+
\begin{gathered}
\begin{vmatrix}\phi_{2,t_{n}}& \phi_{1,t_{n}} \\ \phi_{2,x}&\phi_{1,x}
\end{vmatrix}
\end{gathered}_{+}\partial_{y}\right)( \overset{\rightharpoonup }{\phi}).
\end{equation*}
Similarly,  another vector field can be defined. At first,
\begin{align*}
  \begin{gathered}
      \begin{pmatrix}
      \lambda  \phi_{1,z_{j}} ,& \lambda  \phi_{2,z_{j}}
      \end{pmatrix}
  \end{gathered}J^{*}=
  \begin{gathered}
    \begin{pmatrix}
    \lambda  \phi_{1,z_{j}} ,& \lambda  \phi_{2,z_{j}}
    \end{pmatrix}
  \end{gathered}
    \begin{gathered}
      \begin{pmatrix}
      \phi_{2,y}& -\phi_{2,x} \\ -\phi_{1,y}&\phi_{1,x}
      \end{pmatrix}
      \end{gathered}
    =(\tilde{X_{j}},\tilde{Y_{j}})
  \end{align*}
  and
  \begin{equation}
\frac{\lambda \partial \overset{\rightharpoonup }{\phi}}{\partial{z_{j}}}= (\tilde{X_{j}}\partial_{x}+\tilde{Y_{j}}\partial_{y})( \overset{\rightharpoonup }{\phi}), 
\end{equation}
  so
  \begin{align}
    &\tilde{X_{j}}=\begin{vmatrix}\lambda\phi_{1,z_{j}}& \lambda\phi_{2,z_{j}} \\ \phi_{1,y}&\phi_{2,y}\end{vmatrix}=\begin{vmatrix}\lambda\phi_{1,z_{j}}& \lambda\phi_{2,z_{j}} \\ \phi_{1,y}&\phi_{2,y}\end{vmatrix}_{+}
    =\left(\frac{\lambda \partial(\phi_{1},\phi_{2})}{\partial(z_{j},y)}\right)_{+},
  \\&\tilde{Y_{j}}=\begin{vmatrix}\lambda\phi_{2,z_{j}}& \lambda\phi_{1,z_{j}} \\ \phi_{2,x}&\phi_{1,x}\end{vmatrix}=\begin{vmatrix}\lambda\phi_{2,z_{j}}& \lambda\phi_{1,z_{j}} \\ \phi_{2,x}&\phi_{1,x}\end{vmatrix}_{+}
  =\left(-\frac{\lambda \partial(\phi_{1},\phi_{2})}{\partial(z_{j},x)}\right)_{+}.
  \end{align}
  Finally, because of the identities (9), (10) and (11), the  equation is given as
  \begin{equation*}
\frac{\lambda\partial  \overset{\rightharpoonup }{\phi}}{\partial z_{j}}=\left(
  \begin{vmatrix}\lambda\phi_{1,z_{j}}& \lambda\phi_{2,z_{j}} \\ \phi_{1,y}&\phi_{2,y}\end{vmatrix}_{+}\partial_{x}
  +\begin{vmatrix}\lambda\phi_{2,z_{j}}& \lambda\phi_{1,z_{j}} \\ \phi_{2,x}&\phi_{1,x}\end{vmatrix}_{+}\partial_{y}\right)( \vec{\phi}).
\end{equation*}

\begin{thm}
All coefficient functions $g_{k}$ and $f_{m}$ can solve the hierarchy $(4)$ if and only if  they satisfy following Lax-Sato form equations as
\begin{align}
&\frac{\partial  \overset{\rightharpoonup }{\phi}}{\partial{t_{n}}}=
\left(
\begin{gathered}
\begin{vmatrix}\phi_{1,t_{n}}& \phi_{2,t_{n}} \\ \phi_{1,y}&\phi_{2,y}
\end{vmatrix}
\end{gathered}_{+}\partial_{x}+
\begin{gathered}
\begin{vmatrix}\phi_{2,t_{n}}& \phi_{1,t_{n}} \\ \phi_{2,x}&\phi_{1,x}
\end{vmatrix}
\end{gathered}_{+}\partial_{y}\right)( \overset{\rightharpoonup }{\phi}),
\\&\frac{\lambda\partial  \vec{\phi}}{\partial z_{j}}=\left(
  \begin{vmatrix}\lambda\phi_{1,z_{j}}& \lambda\phi_{2,z_{j}} \\ \phi_{1,y}&\phi_{2,y}\end{vmatrix}_{+}\partial_{x}
  +\begin{vmatrix}\lambda\phi_{2,z_{j}}& \lambda\phi_{1,z_{j}} \\ \phi_{2,x}&\phi_{1,x}\end{vmatrix}_{+}\partial_{y}\right)( \overset{\rightharpoonup }{\phi}).
\end{align}
\end{thm}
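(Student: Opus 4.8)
The plan is to read the hierarchy condition (4) coefficient by coefficient against the explicit $2$-form expansion (6), identifying $(\Phi_{1}\wedge\Phi_{2})_{-}=0$ with the vanishing of the negative-power part of each Jacobian coefficient. The hinge of the whole argument is the $dx\wedge dy$ coefficient: since the ansatz forces $\det J = 1 + (\text{negative powers of }\lambda)$, the condition coming from $dx\wedge dy$ is exactly $(\det J)_{-}=0$, i.e.\ $\det J=1$, and hence $J^{*}=J^{-1}$. I would isolate this normalization first, because it is what turns the purely algebraic identity $(\phi_{1,t_{n}},\phi_{2,t_{n}})=(\phi_{1,t_{n}},\phi_{2,t_{n}})J^{*}J$ into the usable form $\partial_{t_{n}}\vec{\phi}=(X_{n}\partial_{x}+Y_{n}\partial_{y})\vec{\phi}$ already displayed in (7)--(8), and likewise for the $z_{j}$-flows in (9)--(11).

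For the forward direction (hierarchy $\Rightarrow$ Lax--Sato), I would argue as follows. Given (4), the $dx\wedge dy$ coefficient yields $\det J=1$, so (8) and (9) hold with the full Jacobian determinants $X_{n},Y_{n},\tilde{X}_{j},\tilde{Y}_{j}$. The coefficients of $dt_{n}\wedge dy$, $dt_{n}\wedge dx$, $dz_{j}\wedge dy$, $dz_{j}\wedge dx$ in (6) are precisely these same Jacobians, and the hierarchy condition forces each to have vanishing negative part; therefore $X_{n}=(X_{n})_{+}$, $Y_{n}=(Y_{n})_{+}$, $\tilde{X}_{j}=(\tilde{X}_{j})_{+}$, $\tilde{Y}_{j}=(\tilde{Y}_{j})_{+}$. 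Substituting these identifications into (8) and (9) produces exactly the Lax--Sato equations (12) and (13).

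For the converse (Lax--Sato $\Rightarrow$ hierarchy) the main tool is a determinant-substitution trick. Assuming (12)--(13) together with $\det J=1$, I would substitute the Lax--Sato expressions for $\phi_{i,t_{n}}$ and $\lambda\phi_{i,z_{j}}$ into every Jacobian in (6). Using bilinearity and antisymmetry of the $2\times2$ determinant, each mixed Jacobian collapses: for instance $\partial(\phi_{1},\phi_{2})/\partial(t_{n},y)=(X_{n})_{+}\det J=(X_{n})_{+}$, while each pure higher-time coefficient such as $\partial(\phi_{1},\phi_{2})/\partial(t_{n},t_{s})$ collapses to $[(X_{n})_{+}(Y_{s})_{+}-(Y_{n})_{+}(X_{s})_{+}]\det J$. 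Since the series structure makes each projected generator $(X_{n})_{+},(Y_{n})_{+},(\tilde{X}_{j})_{+},(\tilde{Y}_{j})_{+}$ a genuine polynomial in $\lambda$, every resulting expression is a finite combination of such polynomials and hence contains no negative powers of $\lambda$; thus all negative parts vanish and $(\Phi_{1}\wedge\Phi_{2})_{-}=0$. Notably the compatibility terms $dt_{n}\wedge dt_{s}$, $dt_{n}\wedge dz_{j}$, $dz_{j}\wedge dz_{i}$ are dispatched by this same mechanism and require no separate commutativity computation.

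I expect the main obstacle to be the status of the normalization $\det J=1$. In the forward direction it is simply the $dx\wedge dy$ component of (4), but in the converse it is the one ingredient not manifestly implied by (12)--(13) alone, since those equations govern only the $t_{n}$- and $z_{j}$-flows and not the base $x,y$-dependence encoded in $\det J$. I would handle this by treating $\det J=1$ as the $dx\wedge dy$ piece of the hierarchy (the analogue of the constraint $\{S^{1},S^{2}\}=1$ in the Pleba\'{n}ski case recalled in the introduction) and by checking that it is consistent with, and preserved by, the flows (12)--(13); establishing this preservation cleanly, rather than the routine determinant manipulations, is the delicate step.
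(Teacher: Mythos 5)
Your overall route is the same as the paper's: both directions are read off coefficient-by-coefficient from the expansion (6); the forward direction uses the adjugate identity $(\phi_{1,t_{n}},\phi_{2,t_{n}})J^{*}=(X_{n},Y_{n})$ together with $\det J=1$ and the vanishing of negative parts to replace the full Jacobians by their positive projections; and the converse is exactly your determinant-substitution trick (the paper displays it only for the $dt_{n}\wedge dz_{j}$ coefficient and says the others are analogous). So the routine part of your proposal is faithful to the paper.

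The genuine problem is the point you flagged yourself, and your proposed remedy for it does not work. In the converse direction you may assume only (12)--(13), and $(\det J)_{-}=0$ is itself one of the components of (4) that must be proved; so ``treating $\det J=1$ as the $dx\wedge dy$ piece of the hierarchy'' there is circular. A preservation argument cannot close the gap either: the flows (12)--(13) govern only the $t_{n}$- and $z_{j}$-dependence, so preservation would only show that $\det J-1$ vanishes identically \emph{if} it vanishes at some reference value of the higher times, and nothing in this formal setting supplies such an initial condition. (For what it is worth, the paper has the same soft spot: in its converse computation it silently replaces $\begin{vmatrix}\phi_{1,x}&\phi_{2,x}\\ \phi_{1,y}&\phi_{2,y}\end{vmatrix}$ by $1$ without deriving this from the Lax--Sato equations.) The correct fix is a direct, pointwise argument. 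The series structure alone gives $(\det J)_{+}=1$, so write $\det J=1+D$ with $D$ purely negative in $\lambda$. Substituting (12) into the definition of $X_{n}$ and expanding the determinant by rows (the $(Y_{n})_{+}$ terms drop out because they produce proportional rows) yields
\begin{equation*}
X_{n}=(X_{n})_{+}\det J, \qquad\text{hence}\qquad (X_{n})_{-}=(X_{n})_{+}D .
\end{equation*}
The left-hand side contains only negative powers, while $(X_{n})_{+}$ is a polynomial with leading term $\lambda^{n}$; comparing the coefficients of $\lambda^{n-1},\lambda^{n-2},\dots,\lambda^{0}$ forces the first $n$ coefficients of $D$ to vanish. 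Since (12) holds for every $n$, we get $D=0$, i.e.\ $\det J=1$, and this also disposes of the $dx\wedge dy$ component of (4), which your substitution mechanism cannot reach (it contains no flow derivatives to substitute for). With this lemma inserted, your converse --- and the paper's --- becomes complete.
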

\begin{proof}
On the one hand,
according to the preceding calculation, we can get  Lax-Sato equations (12) and (13).

On the other hand, 
for each coefficients of 2-form of $\Phi_{1}\wedge \Phi_{2}$, they have the same validation method. So 
we only display the coefficients of $dt_{n} \wedge dz_{j}$ as
$$
\begin{aligned}
\frac{\lambda \partial(\phi_{1},\phi_{2})}{\partial(t_{n},z_{j})}
&=\begin{vmatrix}\phi_{1,t_{n}}& \phi_{2,t_{n}} \\ \lambda\phi_{1,z_{j}}&\lambda\phi_{2,z_{j}}\end{vmatrix}
=\begin{vmatrix}[X_{n}]_{+}\phi_{1,y}+[Y_{n}]_{+}\phi_{1,x}& [X_{n}]_{+}\phi_{2,x}+[Y_{n}]_{+}\phi_{2,y} 
  \\ [\tilde{X_{j}}]_{+}\phi_{1,x}+[\tilde{Y_{j}}]_{+}\phi_{1,y}&[\tilde{X_{j}}]_{+}\phi_{2,x}+[\tilde{Y_{j}}]_{+}\phi_{2,y}\end{vmatrix}
\\&=\begin{vmatrix}[X_{n}]_{+}& [Y_{n}]_{+} \\ [\tilde{X_{n}}]_{+}&[\tilde{Y_{n}}]_{+}\end{vmatrix}.
\begin{vmatrix}
  \phi_{1,x}&\phi_{2,x}\\
  \phi_{1,y}&\phi_{2,y}
  \end{vmatrix}
=\begin{vmatrix}[X_{n}]_{+}& [Y_{n}]_{+} \\ [\tilde{X_{n}}]_{+}&[\tilde{Y_{n}}]_{+}\end{vmatrix}.
\end{aligned}
$$
So 
$$\left(\frac{\lambda \partial(\phi_{1},\phi_{2})}{\partial(t_{n},z_{j})}\right)_{-}=0.$$
When we apply this method to other coefficients, we can get the identities (4).
\end{proof}
Then we let 
\begin{equation}
T_{n}=\partial_{t_{n}}-
\begin{gathered}
\begin{vmatrix}\phi_{1,t_{n}}& \phi_{2,t_{n}} \\ \phi_{1,y}&\phi_{2,y}
\end{vmatrix}
\end{gathered}_{+}\partial_{x}
-
\begin{gathered}
\begin{vmatrix}\phi_{2,t_{n}}& \phi_{1,t_{n}} \\ \phi_{2,x}&\phi_{1,x}
\end{vmatrix}
\end{gathered}_{+}\partial_{y}
\end{equation}
and 
\begin{equation}
Z_{j}=\lambda\partial_{z_{j}}-
\begin{gathered}
\begin{vmatrix}\lambda \phi_{1,z_{j}}& \lambda \phi_{2,z_{j}} \\ \phi_{1,y}&\phi_{2,y}
\end{vmatrix}
\end{gathered}_{+}\partial_{x}
-
\begin{gathered}
\begin{vmatrix}\lambda \phi_{2,z_{j}}&\lambda \phi_{1,z_{j}} \\ \phi_{2,x}&\phi_{1,x}
\end{vmatrix}
\end{gathered}_{+}\partial_{y},
\end{equation}
we can get
\begin{equation*}
T_{n}( \overset{\rightharpoonup }{\phi})=\overset{\rightharpoonup }{0}
\end{equation*}
and
\begin{equation*}
  Z_{j}( \overset{\rightharpoonup }{\phi})=\overset{\rightharpoonup }{0}.
\end{equation*}
According to above identities,  $T_{n}$ and $Z_{j}$ share the eigenfunction space. Thus  the commutation condition of $T_{n}$ and  $Z_{j}$ can be expressed as
\begin{equation*}
[T_{n},T_{s}]=0,
\end{equation*}
\begin{equation*}
[T_{n},Z_{j}]=0
\end{equation*}
and
\begin{equation*}
[Z_{i},Z_{j}]=0.
\end{equation*}
In the last part of this section, according to previous identities, we can derive some structurally interesting dispersionless integrable equations. These equations are novel within the scope of our current knowledge. At first, we provide some specific examples of (14) and (15) as
\begin{align*}
   &T_{1}= \partial_{t_{1}}-(\lambda - u_{xy})\partial_{x}-u_{xx}\partial_{y},
   \\&T_{2}=\partial_{t_{2}}-(\lambda^{2}-\lambda u_{xy}-g_{2,y})\partial_{x}-(\lambda u_{xx}+g_{2,x})\partial_{y},
\\&T_{3}=\partial_{t_{3}}-(\lambda^{3}-\lambda^{2}u_{xy}-\lambda u_{yt}-g_{3,y})\partial_{x}-(\lambda^{2} u_{xx}+\lambda u_{xt}+g_{3,x})\partial_{y},
 \\&Z_{1}=\lambda \partial_{z_{1}}-u_{yz}\partial_{x}+u_{zx}\partial_{y},
\\&Z_{2}=\lambda \partial_{z_{2}}-(\lambda u_{yy}+f_{2,y}+u_{yz_{2}})\partial_{x}+(\lambda^{2}+\lambda u_{xy}+f_{2,x}+u_{xz_{2}})\partial_{y}.
  \end{align*}
Then we give some examples of the commutation condition of these operators.
\begin{example}
\rm
 When $n=1$, $j=1$, 
  let $t_{1}=t,z_{1}=z$, we can find that $T_{1}$ and $Z_{1}$ are the original Lax pairs of the Doubrov-Ferapontov modified heavenly equation
 and the commutation condition of $T_{1}, Z_{1}$ writes as equation (1).
  \end{example}
\begin{example}
\rm
  When $n=1$, $s=2$,
The commutation condition of $T_{1}, T_{2}$ writes as
  \begin{align*}
  u_{xt_{2}}+u_{xx}u_{yt}-u_{xt}u_{xy}-u_{tt}=0.
  \end{align*}
\end{example}
\begin{example}
\rm
  When $n=2$, $j=1$, The commutation condition of $T_{2}, Z_{1}$ writes as
  \begin{align*}
    &u_{zt_{2}}-u_{yz}u_{xt}+u_{xz}u_{yt}=0.
  \end{align*} 
\end{example}

\begin{example}
\rm
When $j=2$, we can get 
The commutation condition of $Z_{1}, Z_{2}$ writes as
\begin{align*}
  &f_{2,y}u_{xz}+u_{yz_{2}}u_{xz}-f_{2,x}u_{yz}-u_{yz}u_{xz_{2}}=0,
  \\&f_{2,z}+u_{xz}u_{yy}-u_{yz}u_{xz_{2}}=0.
\end{align*}
\end{example}

\begin{example}
\rm
   When $n=3$,
The commutation condition of $T_{1}, T_{3}$ writes as
\begin{align*}
  &g_{3,x}+u_{xx}u_{yt}-u_{xt}u_{xy}-u_{tt}=0,
  \\&u_{xt_{3}}+u_{xx}u_{yt_{2}}-u_{xt_{2}}u_{xy}-g_{3,t}=0.
\end{align*}
\end{example}

\begin{example}
\rm
The commutation condition of $T_{2}, Z_{2}$ writes as
\begin{align*}
  &g_{2,z_{2}}+u_{yt_{2}}+u_{xy}u_{xz_{2}}+u_{xy}f_{2,x}+u_{xy}g_{2,y}-f_{2,y}u_{xx}-u_{yy}g_{2,x}-u_{yz_{2}}u_{xx}=0,
  \\&f_{2,t_{2}}+u_{z_{2}t_{2}}+f_{2,x}g_{2,y}+u_{xz_{2}}g_{2,y}-f_{2,y}g_{2,x}-u_{yz_{2}}g_{2,x}=0,
\\&f_{2,x}-g_{2,y}+u_{xx}u_{yy}-u_{xy}u_{xy}=0.
\end{align*}
\end{example}
\section{\sc \bf Hamiltonian vector fields and Zakharov-Shabat type equations}
According to the theory of dispersionless integrable hierarchy, when the Lax pairs are Hamiltonian vector fields, the hierarchy can be expressed not only as the Lax-Sato equations but also as the Zakharov-Shabat equations. For example, the dKP hierarchy and dDS hierarchy can be written in terms of the Zakharov-Shabat equations. Although the Lax pairs of the  Doubrov-Ferapontov modified heavenly  hierarchy are Hamiltonian vector fields, unlike previous Zakharov-Shabat equations, the Zakharov-Shabat equations we construct in this paper are not equal to zero but instead involve powers of $\lambda$. 
Besides, the space of eigenfunctions is also a Lie algebra, whose commutator is given by the Poisson bracket.
Therefore, we refer to the constructed equations as Zakharov-Shabat type equations.

At first,
\begin{align*}
\nonumber
X_{n}
&=
\begin{vmatrix}\phi_{1,t_{n}}& \phi_{2,t_{n}} \\ \phi_{1,y}&\phi_{2,y}
\end{vmatrix}_{+}
=\begin{vmatrix}\sum\limits_{k=1}^{\infty}\frac{g_{k,t_{n}}}{\lambda^{k}}&\lambda^{n}+ \sum\limits_{m=1}^{\infty}\frac{f_{m,t_{n}}}{\lambda^{m}} \\ -1+\sum\limits_{k=1}^{\infty}\frac{g_{k,y}}{\lambda^{k}}&\sum\limits_{m=1}^{\infty}\frac{f_{m,y}}{\lambda^{m}}
\end{vmatrix}_{+}
\\&=\lambda^{n}-\lambda^{n}\sum\limits_{k=1}^{n}\frac{g_{k,y}}{\lambda^{k}},
\end{align*}
\begin{align*}
\nonumber
Y_{n}&=-\begin{vmatrix}\phi_{1,t_{n}}& \phi_{2,t_{n}} \\ \phi_{1,x}&\phi_{2,x}\end{vmatrix}_{+}
=-\begin{vmatrix}\sum\limits_{k=1}^{\infty}\frac{g_{k,t_{n}}}{\lambda^{k}}&\lambda^{n}+ \sum\limits_{m=1}^{\infty}\frac{f_{m,t_{n}}}{\lambda^{m}} \\ \sum\limits_{k=1}^{\infty}\frac{g_{k,x}}{\lambda^{k}}&1+\sum\limits_{m=1}^{\infty}\frac{f_{m,x}}{\lambda^{m}}
\end{vmatrix}_{+}
\\&=\lambda^{n}\sum\limits_{k=1}^{n}\frac{g_{k,x}}{\lambda^{k}}
\end{align*}
Then, we can express the Lax-Sato equation (12) as 
\begin{equation}
\frac{\partial  \overset{\rightharpoonup }{\phi}}{\partial{t_{n}}}=\{-\lambda^{n}y+A_{n},\overset{\rightharpoonup }{\phi}\},
\end{equation}
where $$A_{n}=\lambda^{n}\sum\limits_{k=1}^{n}\frac{g_{k}}{\lambda^{k}}.$$
Similarly, the Lax-Sato equation (13) can be expressed as
\begin{equation}
\frac{\lambda\partial  \overset{\rightharpoonup }{\phi}}{\partial{z_{j}}}=\{-\lambda^{j}x+B_{j},\overset{\rightharpoonup }{\phi}\}, (j\geq 2)
\end{equation}
where $$B_{j}=-\lambda^{j}\sum\limits_{m=1}^{j}\frac{f_{m}}{\lambda^{m}}-u_{z_{j}},$$
and
\begin{equation}
\frac{\lambda\partial  \overset{\rightharpoonup }{\phi}}{\partial{z_{j}}}=\{B_{j},\overset{\rightharpoonup }{\phi}\}, (j=1),
\end{equation}
where
$$B_{1}=-u_{z}.$$
Therefore, according to the identities (16), (17) and (18), we can get Lax-Sato equations with Hamiltonian vector fields and construct the Zakharov-Shabat type equations.
To find Zakharov-Shabat type equations, we need to specify the relationship between the coefficient functions $g_{n+1}$, $f_{j}$ and the unknown function u.
\begin{proposition}
The relationship between the coefficient functions $g_{n+1}$, $f_{j}$ and the unknown function u are
\begin{equation}
g_{n+1}=u_{t_{n}}
\end{equation}
and
\begin{equation}
f_{j,t}=u_{tz_{j}}+f_{j,y}u_{xx}+u_{yz_{j}}u_{xx}-f_{j,y}u_{xy}-u_{xz_{j}}u_{xy}.
\end{equation}
\end{proposition}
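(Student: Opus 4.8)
The plan is to establish both identities by inserting the Laurent expansions of $\phi_{1}$ and $\phi_{2}$ into the Hamiltonian Lax-Sato equations (16)--(18) and matching coefficients of equal powers of $\lambda$, using the antisymmetry of the Poisson bracket $\{A,B\}=A_{x}B_{y}-A_{y}B_{x}$ to annihilate the bilinear terms.

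For (19), I would apply the $t_{n}$-flow (16) to the first component, $\phi_{1}=-y+\sum_{i\geq 2}z_{i}\lambda^{i-1}+\sum_{k\geq 1}g_{k}\lambda^{-k}$, and extract the coefficient of $\lambda^{-1}$. On the left it equals $g_{1,t_{n}}=u_{xt_{n}}$, since $g_{1}=u_{x}$. On the right, writing $\{-\lambda^{n}y+A_{n},\phi_{1}\}=A_{n,x}\phi_{1,y}+\lambda^{n}\phi_{1,x}-A_{n,y}\phi_{1,x}$, the single linear contribution to $\lambda^{-1}$ comes from $\lambda^{n}\phi_{1,x}$ and equals $g_{n+1,x}$, while the bilinear part yields $\sum_{k=1}^{n}\bigl(g_{k,x}g_{n+1-k,y}-g_{k,y}g_{n+1-k,x}\bigr)$, which is odd under the reflection $k\mapsto n+1-k$ and hence vanishes. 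This gives $\partial_{x}g_{n+1}=\partial_{x}u_{t_{n}}$; the recursive determination of $g_{n+1}$ together with the base case $g_{1}=u_{x}=u_{t_{0}}$ fixes the $x$-independent term and produces $g_{n+1}=u_{t_{n}}$.

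For (20), the $t$-derivative $f_{j,t}$ cannot arise from the $z_{j}$-flow alone, so I would combine two coefficient computations. Applying (16) with $n=1$ to $\phi_{2}$ and reading the coefficient of $\lambda^{-j}$ reproduces the recursion $f_{j,t}=f_{j+1,x}+u_{xx}f_{j,y}-u_{xy}f_{j,x}$, which trades the $t$-derivative for $f_{j+1,x}$. Independently, the coefficient of $\lambda^{-1}$ in the $z_{j}$-flow (17) applied to $\phi_{1}$ has left-hand side $g_{2,z_{j}}=u_{tz_{j}}$ (because $g_{2}=u_{t}$), and a right-hand side assembled from $f_{j+1}$, $u_{xz_{j}}$, $u_{yz_{j}}$ and bilinear terms that again collapse by the same antisymmetry; this expresses $f_{j+1,x}$ through $u_{tz_{j}}$ and the Poisson bracket $\{u_{x},u_{z_{j}}\}=u_{xx}u_{yz_{j}}-u_{xy}u_{xz_{j}}$. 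Eliminating $f_{j+1,x}$ between the two relations and inserting $g_{k}=u_{t_{k-1}}$ from (19) collapses the result to (20).

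The main obstacle I anticipate is the coefficient bookkeeping together with the verification that the quadratic Poisson-bracket terms cancel pairwise; for (20) the delicate point is threading the $t$/$x$-derivative conversion through the recursion while keeping the sign conventions of $B_{j}$ consistent (these differ between $j=1$ and $j\geq 2$), since a slip there flips the signs of the $u_{xx}u_{yz_{j}}$ and $u_{xy}u_{xz_{j}}$ contributions.
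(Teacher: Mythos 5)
Your argument for (19) is correct, and it is a genuinely different route from the paper's: you match the $\lambda^{-1}$ coefficient in the Hamiltonian flow (16) applied to $\phi_{1}$, where the $g$--$g$ bilinear sum $\sum_{k=1}^{n}(g_{k,x}g_{n+1-k,y}-g_{k,y}g_{n+1-k,x})$ is indeed annihilated by the reflection $k\mapsto n+1-k$, giving $u_{xt_{n}}=g_{n+1,x}$. The paper instead computes the commutator $[T_{n},Z_{1}]=0$ with the explicit vector fields, obtains $u_{zt_{n}}+u_{xz}g_{n,y}-u_{yz}g_{n,x}=0$, and combines it with the recursion $g_{n+1,z}=u_{yz}g_{n,x}-u_{zx}g_{n,y}$ to get $g_{n+1,z}=u_{t_{n}z}$. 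Both versions pin down $g_{n+1}-u_{t_{n}}$ only up to an integration constant ($x$-independent for you, $z$-independent for the paper), so your final step is no less rigorous than the paper's.

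The gap is in step (b) of your treatment of (20). The coefficient of $\lambda^{-1}$ in the $z_{j}$-flow (17) applied to $\phi_{1}$ is
\begin{equation*}
g_{2,z_{j}}=-g_{j+1,y}+\sum_{m=1}^{j}\bigl(f_{m,y}g_{j+1-m,x}-f_{m,x}g_{j+1-m,y}\bigr)+u_{xx}u_{yz_{j}}-u_{xy}u_{xz_{j}},
\end{equation*}
using $g_{1}=u_{x}$. Two of your claims about this computation fail. First, $f_{j+1}$ cannot appear in it at all: $B_{j}=-\lambda^{j}\sum_{m=1}^{j}f_{m}\lambda^{-m}-u_{z_{j}}$ contains only $f_{1},\dots,f_{j}$, so the new unknown produced is $g_{j+1,y}$, not $f_{j+1,x}$. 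Second, the mixed bilinear sum does not collapse: the reflection $m\mapsto j+1-m$ sends $f_{m,y}g_{j+1-m,x}$ to $f_{j+1-m,y}g_{m,x}$, which is not the negative of any term in the sum --- the antisymmetry trick works only when both factors come from the same family ($g$--$g$, as in your proof of (19), or $f$--$f$). The missing ingredient is the determinant constraint $\det J=1$: its $\lambda^{-(j+1)}$ coefficient reads $f_{j+1,x}=g_{j+1,y}-\sum_{m=1}^{j}(f_{m,y}g_{j+1-m,x}-f_{m,x}g_{j+1-m,y})$, and inserting this into the display above gives $f_{j+1,x}=-u_{tz_{j}}+u_{xx}u_{yz_{j}}-u_{xy}u_{xz_{j}}$; only then does your elimination through the recursion $f_{j,t}=f_{j+1,x}+u_{xx}f_{j,y}-u_{xy}f_{j,x}$ finish the proof. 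The paper sidesteps all of this by directly computing the commutator $[Z_{j},T_{1}]=0$ with the explicit vector field $Z_{j}$. Note finally that the repaired computation (and the paper's own proof) yields $f_{j,t}=-u_{tz_{j}}+f_{j,y}u_{xx}+u_{yz_{j}}u_{xx}-f_{j,x}u_{xy}-u_{xz_{j}}u_{xy}$: the sign of $u_{tz_{j}}$ and the subscript $y$ on $f_{j}$ in the $u_{xy}$ term of the stated (20) are typos in the paper, so the sign sensitivity you flagged at the end is real, but it resides in the statement itself rather than in your bookkeeping.
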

In fact, the  general case of $T_{n}$ is given as
$$T_{n}=\partial_{t_{n}}-(\lambda^{n}-\lambda^{n-1}g_{1,y}-\cdots-g_{n,y})\partial_{x}-(\lambda^{n-1}g_{1,x}+\cdots+g_{n,y})\partial_{y}.$$
The commutation condition of $T_{n}$ and $Z_{1}$ writes as
$$u_{zt_{n}}+u_{xz}g_{n,y}-u_{yz}g_{n,x}=0.$$
According to the identity
$$g_{k}=\partial_{z}^{-1}(u_{yz}g_{k-1,x}-u_{zx}g_{k-1,y}),$$
we can find $$g_{n+1}=u_{t_{n}}.$$
Similarly, the general case of $Z_{j}$ is calculated as
$$Z_{j}=\lambda\partial_{z_{j}}-(\lambda^{j-1}f_{1,y}+\lambda^{j-2}f_{2,y}+\cdots+f_{j,y}+u_{yz_{j}})\partial_{x}+(\lambda^{j}+\lambda^{j-1}f_{1,x}+\lambda^{j-2}f_{2,x}+\cdots+f_{j,x}+u_{xz_{j}})\partial_{y}.$$
The commutation condition of $Z_{j}$ and $T_{1}$ writes as
$$f_{j,t}=-u_{tz_{j}}+f_{j,y}u_{xx}+u_{yz_{j}}u_{xx}-f_{j,x}u_{xy}-u_{xz_{j}}u_{xy}.$$
\begin{thm}
The hierarchy  is  equivalent to the following Zakharov-Shabat type equations as
\begin{subequations}
\begin{align}
&\frac{\partial A_{n}}{\partial t_{s}}-\frac{\partial A_{s}}{\partial t_{n}}+\{A_{n},A_{s}\}=-\lambda^{n}A_{s,x}+\lambda^{s}A_{n,x},
\\&\frac{\lambda\partial A_{n}}{\partial z}-\frac{\partial B_{1}}{\partial t_{n}}+\{A_{n},B_{1}\}=-\lambda^{n}B_{1,x}, 
\\&\frac{\lambda\partial A_{n}}{\partial z_{j}}-\frac{\partial B_{j}}{\partial t_{n}}+\{A_{n},B_{j}\}=-\lambda^{j}A_{n,y}-\lambda^{n}B_{j,x},j\geq2,
\\&\frac{\lambda\partial B_{j}}{\partial z}-\frac{\lambda\partial B_{1}}{\partial z_{j}}+\{B_{j},B_{1}\}=\lambda^{j}B_{1,y}, j\geq2,
\\&\frac{\lambda\partial B_{i}}{\partial z_{j}}-\frac{\lambda\partial B_{j}}{\partial z_{i}}+\{B_{i},B_{j}\}=-\lambda^{j}B_{i,y}+\lambda^{i}B_{j,y},i\geq2, j\geq2,
\end{align}
\end{subequations}
\end{thm}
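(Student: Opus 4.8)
The plan is to build on the equivalence, established in Section~2, between the hierarchy (4) and the commutativity of the Lax operators, $[T_n,T_s]=0$, $[T_n,Z_j]=0$ and $[Z_i,Z_j]=0$, and then to translate each commutator into a relation among the generating Hamiltonians via the representations (16)--(18). Writing $T_n=\partial_{t_n}-\{H_n^{T},\cdot\}$ with $H_n^{T}=-\lambda^{n}y+A_n$, and $Z_j=\lambda\partial_{z_j}-\{H_j^{Z},\cdot\}$ with $H_j^{Z}=-\lambda^{j}x+B_j$ for $j\geq2$ and $H_1^{Z}=B_1=-u_z$, every Lax operator takes the common shape $L_a=\mu_a\partial_a-\{H_a,\cdot\}$, where $\mu_a\in\{1,\lambda\}$ is independent of $x,y$ and $\{\cdot,\cdot\}$ is the $(x,y)$ Poisson bracket.

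The structural lemma I would isolate first is that the commutator of two such vector fields is again a Hamiltonian vector field. Using that $\{H,\cdot\}$ is a derivation, that $[\partial_a,\{H_b,\cdot\}]=\{\partial_aH_b,\cdot\}$, and the Jacobi identity in the form $[\{H_a,\cdot\},\{H_b,\cdot\}]=\{\{H_a,H_b\},\cdot\}$, one obtains
\begin{equation*}
[L_a,L_b]=\{\,\mu_b\partial_bH_a-\mu_a\partial_aH_b+\{H_a,H_b\}\,,\;\cdot\,\}.
\end{equation*}
Since a Hamiltonian vector field $\{F,\cdot\}$ is the zero operator precisely when $F$ is free of $x$ and $y$, the vanishing of each commutator is equivalent to the associated generator $F_{ab}=\mu_b\partial_bH_a-\mu_a\partial_aH_b+\{H_a,H_b\}$ being independent of $x,y$.

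Next I would substitute the explicit seeds into $F_{ab}$ and expand. The only brackets needing care are those against the linear terms: $\{-\lambda^{n}y,A_s\}=\lambda^{n}A_{s,x}$ and $\{A_n,-\lambda^{s}y\}=-\lambda^{s}A_{n,x}$ for the $t$--$t$ pair; $\{-\lambda^{n}y,B_j\}=\lambda^{n}B_{j,x}$ and $\{A_n,-\lambda^{j}x\}=\lambda^{j}A_{n,y}$ for the mixed $t$--$z$ pair; and $\{-\lambda^{i}x,B_j\}=-\lambda^{i}B_{j,y}$, $\{B_i,-\lambda^{j}x\}=\lambda^{j}B_{i,y}$ for the $z$--$z$ pair. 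These cross terms are exactly the $\lambda$-power contributions appearing on the right-hand sides of (22a)--(22e), while the pure-seed brackets $\{y,y\}=\{x,x\}=0$ contribute nothing and $\{-\lambda^{n}y,-\lambda^{j}x\}=-\lambda^{n+j}$ yields only an $(x,y)$-free constant. Collecting the genuinely $(x,y)$-dependent part of $F_{ab}$ and setting it to zero reproduces each of (22a)--(22e); for (22b) and (22d) one uses that $H_1^{Z}=-u_z$ carries no $-\lambda x$ seed, which is why those two equations have a single $\lambda$-power term rather than two. The converse is immediate: the Zakharov--Shabat type equations assert $F_{ab}=0$ (up to the harmless $(x,y)$-free constant), whence every commutator vanishes and the hierarchy (4) holds.

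I expect the principal difficulty to be organizational rather than conceptual: keeping the signs and $\lambda$-powers straight when one operator carries the factor $\lambda\partial_{z_j}$ in the mixed $t$--$z$ and the $z$--$z$ commutators, and verifying that the residual $(x,y)$-independent pieces (such as the $-\lambda^{n+j}$ produced by $\{-\lambda^{n}y,-\lambda^{j}x\}$) annihilate the vector field and therefore never enter the final equations. A secondary point to confirm is that the coefficient relations of Proposition~3.1 are compatible with the definitions of $A_n$ and $B_j$, so that the generators $F_{ab}$ are consistently expressed through the single unknown $u$.
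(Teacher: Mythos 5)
Your structural lemma $[L_a,L_b]=\{\mu_b\partial_bH_a-\mu_a\partial_aH_b+\{H_a,H_b\},\cdot\}$ and all the cross-bracket evaluations are correct, and they do produce the right $\lambda$-power terms. The genuine gap is in the logical bridge. As you yourself state, a Hamiltonian vector field $\{F,\cdot\}$ vanishes if and only if $F$ is independent of $x,y$ --- not if and only if $F=0$. Hence your forward route, hierarchy $\Rightarrow$ commutativity $\Rightarrow$ Zakharov--Shabat, breaks at the second arrow: commutativity only forces each generator $F_{ab}$ to be $(x,y)$-independent, and $F_{ab}$ could still be a nonzero function of $\lambda$ and the remaining times; there is no canonical ``$(x,y)$-dependent part'' of a sum of $(x,y)$-dependent terms that you can ``set to zero''. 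The theorem asserts the exact equalities $F_{ab}=0$ (after the harmless seed constant $-\lambda^{n+j}$ cancels), and the paper obtains them without ever passing through commutativity in this direction: it computes the positive part of each coefficient of $\Phi_{1}\wedge\Phi_{2}$ twice --- once directly from the Laurent expansions, where by the very definitions of $A_n$, $B_j$ the result is recognized as $\partial_{t_n}A_s-\partial_{t_s}A_n$ and its analogues, and once by substituting the Hamiltonian Lax--Sato equations (16)--(18) --- and equates the two expressions for the same quantity. That equation-of-two-computations is what pins $F_{ab}$ down to exactly zero. Moreover, this identification genuinely needs Proposition 3.1, which you relegate to a ``secondary point'': for instance $g_{n+1}=u_{t_n}$ is what converts $\lambda^{n+1}\sum_{k=1}^{n+1}g_{k,z_j}\lambda^{-k}$ into $\lambda\partial A_n/\partial z_j+u_{t_nz_j}$, and that $u_{t_nz_j}$ is needed to absorb the $u_{z_j}$-term hidden in $\partial B_j/\partial t_n$; without it the exact form of the mixed $t_n$--$z_j$ equation (and of the $z$--$z$ ones, which use the recursion for $f_m$) simply does not emerge.

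Your converse direction has a symmetric gap. From $F_{ab}=0$ you correctly conclude that all commutators vanish, but Section 2 establishes only hierarchy $\Rightarrow$ Lax--Sato $\Rightarrow$ commutativity (the commutator is a vector field in $\partial_x,\partial_y$ annihilating $\phi_1,\phi_2$, whose $(x,y)$-Jacobian is unimodular); the implication ``commutators vanish $\Rightarrow$ the hierarchy (4) holds'', which your last sentence invokes, is proved nowhere and is not obvious. The paper's equivalence instead rests on the fact that the exact equations, expanded in powers of $\lambda$ and combined with Proposition 3.1, reproduce precisely the flow equations of the hierarchy (this is what its closing example with $n=j=2$ checks against Example 6). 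So the two implications actually available are: hierarchy $\Rightarrow$ exact Zakharov--Shabat equations (via the 2-form coefficients), and Zakharov--Shabat $\Rightarrow$ commutation relations; your proposal relies on the two missing converses, and therefore establishes neither direction of the stated equivalence without the paper's direct computation.
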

\begin{proof}
In first situation, equation (21a) will be proved.
According to the identities (6), we consider the coefficients of $dt_{n}\wedge dt_{s}.$
\begin{align*}
\begin{vmatrix} \phi_{1,t_{n}}&\phi_{2,t_{n}} \\ \phi_{1,t_{s}}&\phi_{2,t_{s}}
\end{vmatrix}_{+}&=
\begin{vmatrix}\sum\limits_{k=1}^{\infty}g_{k,t_{n}}\lambda^{-k}&\lambda^{n}+\sum\limits_{m=1}^{\infty}f_{m,t_{n}}\lambda^{-m} \\ \sum\limits_{k=1}^{\infty}g_{k,t_{s}}\lambda^{-k}&\lambda^{s}+\sum\limits_{m=1}^{\infty}f_{m,t_{s}}\lambda^{-m}
\end{vmatrix}_{+}
\\&=-\lambda^{n}\sum\limits_{k=1}^{n}\frac{g_{k,t_{s}}}{\lambda^{k}}+\lambda^{s}\sum\limits_{k=1}^{s}\frac{g_{k,t_{n}}}{\lambda^{k}}
\\&=\begin{vmatrix}\{-\lambda^{n}y+A_{n},\phi_{1}\}&\{-\lambda^{n}y+A_{n},\phi_{2}\} \\ \{-\lambda^{s}y+A_{s},\phi_{1}\}&\{-\lambda^{s}y+A_{s},\phi_{2}\}
\end{vmatrix}_{+}
\\&=\lambda^{n}A_{s,x}-\lambda^{s}A_{n,x}+\{A_{n},A_{s}\}.
\end{align*}
Due to the identity$$\frac{\partial A_{s}}{\partial t_{n}}-\frac{\partial A_{n}}{\partial t_{s}}=-\lambda^{n}\sum\limits_{k=1}^{n}\frac{g_{k,t_{s}}}{\lambda^{k}}+\lambda^{s}\sum\limits_{k=1}^{s}\frac{g_{k,t_{n}}}{\lambda^{k}},$$ 
so
$$\frac{\partial A_{n}}{\partial t_{s}}-\frac{\partial A_{s}}{\partial t_{n}}+\lambda^{n}A_{s,x}-\lambda^{s}A_{n,x}+\{A_{n},A_{s}\}=0.$$
On the other hand, 
\begin{align*}
[T_{n},T_{s}]&=[\partial_{t_{n}}-\{-\lambda^{n}y+A_{n},\cdot\}, \partial_{t_{s}}-\{-\lambda^{s}y+A_{s},\cdot\}]
\\&=\left(\frac{\partial A_{n}}{\partial t_{s}}-\frac{\partial A_{s}}{\partial t_{n}}+\lambda^{n}A_{s,x}-\lambda^{s}A_{n,x}+\{A_{n},A_{s}\}\right)_{x}\partial_{y}
\\&-\left(\frac{\partial A_{n}}{\partial t_{s}}-\frac{\partial A_{s}}{\partial t_{n}}+\lambda^{n}A_{s,x}-\lambda^{s}A_{n,x}+\{A_{n},A_{s}\}\right)_{y}\partial_{x}
\\&=0.
\end{align*}
Therefore the commutation condition of $T_{n},T_{s}$ is proved.

In second situation, the coefficients of $dz_{j}\wedge dt_{n}$ and $dt_{n}\wedge dz_{1}$ should be studied.
We first consider the coefficients of $dz_{j}\wedge dt_{n}.$  On the one hand,
\begin{align*}
\begin{vmatrix}\lambda \phi_{1,z_{j}}&\lambda \phi_{2,z_{j}} \\ \phi_{1,t_{n}}&\phi_{2,t_{n}}
\end{vmatrix}_{+}&=
\begin{vmatrix}\lambda^{j}+\lambda\sum\limits_{k=1}^{\infty}g_{k,z_{j}}\lambda^{-k}& \lambda\sum\limits_{m=1}^{\infty}f_{m,z_{j}}\lambda^{-m} \\ \sum\limits_{k=1}^{\infty}g_{k,t_{n}}\lambda^{-k}&\lambda^{n}+\sum\limits_{m=1}^{\infty}f_{m,t_{n}}\lambda^{-m}
\end{vmatrix}_{+}
\\&=\lambda^{n+j}+\lambda^{j}\sum\limits_{m=1}^{j}\frac{f_{m,t_{n}}}{\lambda^{m}}+\lambda^{n+1}\sum\limits_{k=1}^{n+1}\frac{g_{k,z_{j}}}{\lambda^{k}}
\\&=\begin{vmatrix}\{-\lambda^{j}x+B_{j},\phi_{1}\}&\{-\lambda^{j}x+B_{j},\phi_{2}\} \\ \{-\lambda^{n}y+A_{n},\phi_{1}\}&\{-\lambda^{n}y+A_{n},\phi_{2}\}
\end{vmatrix}_{+}
\\&=\lambda^{n+j}-\lambda^{j}A_{n,y}-\lambda^{n}B_{j,x}-\{A_{n},B_{j}\}.
\end{align*}
Therefore, we can get the equation
\begin{equation*}
\lambda^{j}\sum\limits_{m=1}^{j}\frac{f_{m,t_{n}}}{\lambda^{m}}+\lambda^{n+1}\sum\limits_{k=1}^{n+1}\frac{g_{k,z_{j}}}{\lambda^{k}}+\lambda^{j}A_{n,y}+\lambda^{n}B_{j,x}+\{A_{n},B_{j}\}=0.
\end{equation*}
According to the identity (19), we can find
$$\lambda^{n+1}\sum\limits_{k=1}^{n+1}\frac{g_{k,z_{j}}}{\lambda^{k}}=\lambda^{n+1}\sum\limits_{k=1}^{n}\frac{g_{k,z_{j}}}{\lambda^{k}}+g_{n+1,z_{j}}=\lambda^{n+1}\sum\limits_{k=1}^{n}\frac{g_{k,z_{j}}}{\lambda^{k}}+u_{t_{n},z_{j}}=\frac{\lambda\partial{A_{n}}}{\partial z_{j}}+u_{t_{n},z_{j}}$$ and 
$$\frac{\lambda\partial A_{n}}{\partial z_{j}}-\frac{\partial B_{j}}{\partial t_{n}}+\lambda^{j}A_{n,y}+\lambda^{n}B_{j,x}+\{A_{n},B_{j}\}=0,$$
which is equation (21c). 
On the other hand, 
\begin{align*}
[T_{n},Z_{j}]&=[\partial_{t_{n}}-\{-\lambda^{n}y+A_{n},\cdot\},\lambda\partial_{z_{j}}-\{-\lambda^{j}x+B_{j},\cdot\}]
\\&=\left(\frac{\lambda\partial A_{n}}{\partial z_{j}}-\frac{\partial B_{j}}{\partial t_{n}}+\lambda^{j}A_{n,y}+\lambda^{n}B_{j,x}+\{A_{n},B_{j}\}\right)_{x}\partial_{y}
\\&-\left(\frac{\lambda\partial A_{n}}{\partial z_{j}}-\frac{\partial B_{j}}{\partial t_{n}}+\lambda^{j}A_{n,y}+\lambda^{n}B_{j,x}+\{A_{n},B_{j}\}\right)_{y}\partial_{x}
\\&=0.
\end{align*}
Similarly, we can get equation (21b) and its equivalent commutation condition of $T_{n},Z_{1}$.

In the final situation, the coefficients of $dz_{j}\wedge dz_{i}$ and $dz_{j}\wedge dz_{1}$ will be focused on. Firstly, we consider the  coefficients of $dz_{j}\wedge dz_{i}.$ On the one hand,
\begin{align*}
\begin{vmatrix}\lambda \phi_{1,z_{j}}&\lambda \phi_{2,z_{j}} \\ \lambda \phi_{1,z_{i}}&\lambda \phi_{2,z_{i}}
\end{vmatrix}_{+}&=
\begin{vmatrix}\lambda^{j}+\lambda\sum\limits_{k=1}^{\infty}g_{k,z_{j}}\lambda^{-k}& \lambda\sum\limits_{m=1}^{\infty}f_{m,z_{j}}\lambda^{-m} \\ \lambda^{i}+\lambda\sum\limits_{k=1}^{\infty}g_{k,z_{i}}\lambda^{-k}& \lambda\sum\limits_{m=1}^{\infty}f_{m,z_{i}}\lambda^{-m}
\end{vmatrix}_{+}
\\&=\lambda^{j+1}\sum\limits_{m=1}^{j+1}\frac{f_{m,z_{i}}}{\lambda^{m}}-\lambda^{i+1}\sum\limits_{m=1}^{i+1}\frac{f_{m,z_{j}}}{\lambda^{m}}+g_{1,z_{j}}f_{1,z_{i}}-g_{1,z_{i}}f_{1,z_{j}}
\\&=\begin{vmatrix}\{-\lambda^{j}x+B_{j},\phi_{1}\}&\{-\lambda^{j}x+B_{j},\phi_{2}\} \\ \{-\lambda^{i}x+B_{i},\phi_{1}\}&\{-\lambda^{i}x+B_{i},\phi_{2}\}
\end{vmatrix}_{+}
\\&=\lambda^{i}B_{j,y}-\lambda^{j}B_{i,y}+\{B_{j},B_{i}\}.
\end{align*}
According to the identity
$$f_{m}=\partial^{-1}_{x}(f_{m-1,t}+u_{xy}f_{m-1,x}-u_{xx}f_{m-1,y})$$
and identity (20),
the identity 
$$\left(f_{j+1,z_{i}}-f_{i+1,z_{j}}+g_{1,z_{j}}f_{1,z_{i}}-g_{1,z_{i}}f_{1,z_{j}}\right)_{x}=0$$
can be calculated. Therefore, we can get
$$\frac{\lambda\partial B_{i}}{\partial z_{j}}-\frac{\lambda\partial B_{j}}{\partial z_{i}}+\lambda^{j}B_{i,y}-\lambda^{i}B_{j,y}+\{B_{i},B_{j}\}=0,$$
which is the equation (21e).
On the other hand,
\begin{align*}
[Z_{j},Z_{i}]&=[\lambda\partial_{z_{j}}-\{-\lambda^{j}x+B_{j},\cdot\},\lambda\partial_{z_{i}}-\{-\lambda^{i}x+B_{i},\cdot\}]
\\&=\left(\frac{\lambda\partial B_{i}}{\partial z_{j}}-\frac{\lambda\partial B_{j}}{\partial z_{i}}+\lambda^{j}B_{i,y}-\lambda^{i}B_{j,y}+\{B_{i},B_{j}\}\right)_{x}\partial_{y}
\\&-\left(\frac{\lambda\partial B_{i}}{\partial z_{j}}-\frac{\lambda\partial B_{j}}{\partial z_{i}}+\lambda^{j}B_{i,y}-\lambda^{i}B_{j,y}+\{B_{i},B_{j}\}\right)_{y}\partial_{x}
\\&=0.
\end{align*}
Similarly, we can get equation (21d) and its equivalent commutation condition of $Z_{j},Z_{1}$. 

Next, we will give a related example to demonstrate the equivalence shown in Theorem 3.1.
We only take equation (21c) as an example.
When j=2 and n=2, we can get
\begin{align*} 
&A_{2}=\lambda^{2}\sum\limits_{k=1}^{2}\frac{g_{k}}{\lambda^{k}},
\\&B_{2}=-\lambda^{2}\sum\limits_{m=1}^{2}\frac{f_{m}}{\lambda^{m}}-u_{z_{2}}.
\end{align*}
Therefore, 
\begin{align*} 
&\lambda^{3}\sum\limits_{k=1}^{2}\frac{g_{k,z_{2}}}{\lambda^{k}}+\lambda^{2}\sum\limits_{m=1}^{2}\frac{f_{m,t_{2}}}{\lambda^{m}}+u_{z_{2},t_{2}}+\{\lambda^{2}\sum\limits_{k=1}^{2}\frac{g_{k}}{\lambda^{k}},-\lambda^{2}\sum\limits_{m=1}^{2}\frac{f_{m}}{\lambda^{m}}-u_{z_{2}}\}
\\&=-\lambda^{4}\sum\limits_{k=1}^{2}\frac{g_{k,y}}{\lambda^{k}}+
\lambda^{4}\sum\limits_{m=1}^{2}\frac{f_{m,x}}{\lambda^{m}}+\lambda^{2}u_{z_{2},x}.
\end{align*}
Then we can obtain the following equations as
\begin{align*}
  &g_{2,z_{2}}+u_{yt_{2}}+u_{xy}u_{xz_{2}}+u_{xy}f_{2,x}+u_{xy}g_{2,y}-f_{2,y}u_{xx}-u_{yy}g_{2,x}-u_{yz_{2}}u_{xx}=0,
  \\&f_{2,t_{2}}+u_{z_{2}t_{2}}+f_{2,x}g_{2,y}+u_{xz_{2}}g_{2,y}-f_{2,y}g_{2,x}-u_{yz_{2}}g_{2,x}=0,
\\&f_{2,x}-g_{2,y}+u_{xx}u_{yy}-u_{xy}u_{xy}=0.
\end{align*}
These equations are identical to the flow equations which are obtained from commutation condition of $T_{2}$ and $Z_{2}$ in Example 6. Similarly,  other equations in Theorem 3.1 can be used to calculate the flow equations. 
\end{proof}

\section{\sc \bf The pre-reduced hierarchy  }
In our previous definition of hierarchy, $detJ=1$ serves as a reduction condition. 
When we remove this condition, it allows for a  general hierarchy.
For example, when $detJ=1$, the Manakov-Santini hierarchy can reduce to dKP hierarchy.
The general Dunajski hierarchy  defined by Bogdanov, Dryuma and Manakov
is $((detJ)^{-1}d\phi^{0}\wedge d\phi^{1}\wedge \phi^{2})_{-}=0.$ When $detJ=1$,
the well-known Dunajski hierarchy will be presented. 
Similarly, pre-reduced Doubrov-Ferapontov modified heavenly hierarchy can also be constructed.
\begin{definition}
  The pre-reduced hierarchy associated with the modified heavenly equation is defined as
\begin{equation}\left((det^{-1}J)\Phi_{1}\wedge \Phi_{2}\right)_{-}=0.\end{equation} 
\end{definition}
Similarly we deal with the pre-reduced hierarchy in section 2, we can obtain that

$$
\begin{aligned}
\begin{gathered}
    \begin{pmatrix}
    \phi_{1,t_{n}} ,& \phi_{2,t_{n}}
    \end{pmatrix}
\end{gathered}J^{*}&=
\begin{gathered}
\begin{pmatrix}
 \phi_{1,t_{n}} ,&  \phi_{2,t_{n}}
\end{pmatrix}
\end{gathered}
\begin{gathered}
  \begin{pmatrix}
  \phi_{2,y}& -\phi_{2,x} \\ -\phi_{1,y}&\phi_{1,x}
  \end{pmatrix}
\end{gathered}
=(detJ)
\begin{gathered}
\begin{pmatrix}
X_{n},&Y_{n}
\end{pmatrix}
\end{gathered}
\end{aligned}$$
and 
$$\begin{aligned}
  \begin{gathered}
      \begin{pmatrix}
      \lambda  \phi_{1,z_{j}} ,& \lambda  \phi_{2,z_{j}}
      \end{pmatrix}
  \end{gathered}J^{*}=
  \begin{gathered}
    \begin{pmatrix}
    \lambda  \phi_{1,z_{j}} ,& \lambda  \phi_{2,z_{j}}
    \end{pmatrix}
  \end{gathered}
    \begin{gathered}
      \begin{pmatrix}
      \phi_{2,y}& -\phi_{2,x} \\ -\phi_{1,y}&\phi_{1,x}
      \end{pmatrix}
      \end{gathered}
    =(detJ)(\tilde{X_{j}},\tilde{Y_{j}}),
  \end{aligned}$$
where 
\begin{align*}
&P_{n}=\left((det^{-1}J)\cdot\frac{\partial(\phi_{1},\phi_{2})}{\partial(t_{n},y)}\right)_{+},Q_{n}=\left(-(det^{-1}J)\cdot\frac{\partial(\phi_{1},\phi_{2})}{\partial(t_{n},x)}\right)_{+},
\\&\tilde{P_{j}}=\left((det^{-1}J)\cdot\frac{\partial(\phi_{1},\phi_{2})}{\partial(z_{j},y)}\right)_{+},\tilde{Q_{j}}
=\left(-(det^{-1}J)\cdot\frac{\partial(\phi_{1},\phi_{2})}{\partial(z_{j},x)}\right)_{+}.
\end{align*}
Therefore we can get the following Lax-Sato equations.
\begin{thm}

All coefficient functions $g_{k}$ and $f_{m}$  solve the hierarchy $(22)$ if and only if  they satisfy  the following Lax-Sato form equations 
\begin{align*}
&\partial_{t_{n}}(\overset{\rightharpoonup }{\phi})=
\left(\left((det^{-1}J)\begin{gathered}
\begin{vmatrix}\phi_{1,t_{n}}& \phi_{2,t_{n}} \\ \phi_{1,y}&\phi_{2,y}
\end{vmatrix}
\end{gathered}\right)_{+}\partial_{x}
+
\left((det^{-1}J)\begin{gathered}
\begin{vmatrix}\phi_{2,t_{n}}& \phi_{1,t_{n}} \\ \phi_{2,x}&\phi_{1,x}
\end{vmatrix}
\end{gathered}\right)_{+}\partial_{y}\right)(\overset{\rightharpoonup }{\phi}),\\
&\lambda\partial_{z_{j}}(\overset{\rightharpoonup }{\phi})=
\left(\left((det^{-1}J)\begin{gathered}
\begin{vmatrix}\lambda \phi_{1,z_{j}}& \lambda \phi_{2,z_{j}} \\ \phi_{1,y}&\phi_{2,y}
\end{vmatrix}
\end{gathered}\right)_{+}\partial_{x}
+
\left((det^{-1}J)\begin{gathered}
\begin{vmatrix}\lambda \phi_{2,z_{j}}& \lambda \phi_{1,z_{j}} \\ \phi_{2,x}&\phi_{1,x}
\end{vmatrix}
\end{gathered}\right)_{+}\partial_{y}\right)(\overset{\rightharpoonup }{\phi}).
\end{align*}
\end{thm}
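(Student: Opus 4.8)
The plan is to mirror the proof of Theorem 2.1, now carrying the weight $\det^{-1}J$ throughout and no longer invoking the normalization $\det J=1$, which is precisely the reduction dropped in passing from (4) to (22). The starting point is the purely algebraic identity behind the Lax--Sato form. Since $J^{*}J=(\det J)I$, the definitions displayed before the theorem give $(X_{n},Y_{n})=(\det^{-1}J)(\phi_{1,t_{n}},\phi_{2,t_{n}})J^{*}$, and hence
$$(X_{n},Y_{n})J=(\det^{-1}J)(\phi_{1,t_{n}},\phi_{2,t_{n}})J^{*}J=(\phi_{1,t_{n}},\phi_{2,t_{n}}),$$
that is $\phi_{i,t_{n}}=X_{n}\phi_{i,x}+Y_{n}\phi_{i,y}$ with no projection and for arbitrary $\det J$; the same computation yields $\lambda\phi_{i,z_{j}}=\tilde{X}_{j}\phi_{i,x}+\tilde{Y}_{j}\phi_{i,y}$.

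For the forward implication I would read the pre-reduced hierarchy (22) off its $2$-form components. The coefficient of $dt_{n}\wedge dy$ in $(\det^{-1}J)\Phi_{1}\wedge\Phi_{2}$ is exactly $X_{n}=(\det^{-1}J)\frac{\partial(\phi_{1},\phi_{2})}{\partial(t_{n},y)}$, so (22) forces $(X_{n})_{-}=0$, i.e. $X_{n}=P_{n}$; the coefficients of $dt_{n}\wedge dx$, $dz_{j}\wedge dy$ and $dz_{j}\wedge dx$ give in the same way $Y_{n}=Q_{n}$, $\tilde{X}_{j}=\tilde{P}_{j}$ and $\tilde{Y}_{j}=\tilde{Q}_{j}$. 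Substituting these four equalities into the exact identity above produces the two Lax--Sato equations in the statement.

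For the converse I would argue exactly as in Theorem 2.1, checking that every $2$-form coefficient of $(\det^{-1}J)\Phi_{1}\wedge\Phi_{2}$ has vanishing negative part; by the symmetry of the computation it suffices to treat one representative, say $dt_{n}\wedge dz_{j}$. Setting $M=\begin{pmatrix}P_{n}&Q_{n}\\ \tilde{P}_{j}&\tilde{Q}_{j}\end{pmatrix}$ and inserting the Lax--Sato relations $\phi_{i,t_{n}}=P_{n}\phi_{i,x}+Q_{n}\phi_{i,y}$, $\lambda\phi_{i,z_{j}}=\tilde{P}_{j}\phi_{i,x}+\tilde{Q}_{j}\phi_{i,y}$, the underlying $2\times2$ Jacobian matrix factors as $MJ$, whence
$$(\det^{-1}J)\frac{\lambda\partial(\phi_{1},\phi_{2})}{\partial(t_{n},z_{j})}=(\det^{-1}J)\det(M)\det(J)=P_{n}\tilde{Q}_{j}-Q_{n}\tilde{P}_{j}.$$
Since $P_{n},Q_{n},\tilde{P}_{j},\tilde{Q}_{j}$ are all $(\,\cdot\,)_{+}$ series, the right-hand side contains only nonnegative powers of $\lambda$, so its negative part vanishes and (22) holds for this term; the remaining coefficients ($dx\wedge dy$, $dt_{n}\wedge dt_{s}$, $dz_{i}\wedge dz_{j}$, and the mixed ones) are handled by the identical factorization.

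The step I expect to be the main obstacle is bookkeeping the weight in the non-reduced setting: because $\det J\neq1$ here, one cannot simply discard it as in Section 2, and the whole argument hinges on the clean cancellation $\det^{-1}J\cdot\det J=1$ surviving after the determinant factorization, uniformly across all five families of coefficients and in the presence of the extra $\lambda$ factors carried by the $dz_{j}$ differentials. Verifying that this cancellation does not disturb the $(\,\cdot\,)_{+}/(\,\cdot\,)_{-}$ splitting is the one point that genuinely requires care.
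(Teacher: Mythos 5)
Your proposal is correct and takes essentially the same route as the paper: the same exact algebraic identity $(X_{n},Y_{n})J=(\phi_{1,t_{n}},\phi_{2,t_{n}})$ (from $J^{*}J=(\det J)I$) for the forward direction, and for the converse the same factorization of the representative coefficient $dt_{n}\wedge dz_{j}$ into $\det M\cdot\det J$, with the weight $\det^{-1}J$ cancelling $\det J$ so that only the product of $(\,\cdot\,)_{+}$ series survives and the negative part vanishes. The only difference is presentational: you spell out the forward implication and the cancellation bookkeeping explicitly, whereas the paper compresses both into ``similar to the proof in section 2'' and displays only the $dt_{n}\wedge dz_{j}$ computation.
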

The proof of the theorem is similar to the proof in section 2, we only display the coefficients of $dt_{n} \wedge dz_{j}$ as
$$
\begin{aligned}
(det^{-1}J)*\frac{\lambda \partial(\phi_{1},\phi_{2})}{\partial(t_{n},z_{j})}
&=(det^{-1}J)\begin{vmatrix}\phi_{1,t_{n}}& \phi_{2,t_{n}} \\ \lambda\phi_{1,z_{j}}&\lambda\phi_{2,z_{j}}\end{vmatrix}
\\&=(det^{-1}J)\begin{vmatrix}[P_{n}]_{+}\phi_{1,x}+[Q_{n}]_{+}\phi_{1,y}& [P_{n}]_{+}\phi_{2,x}+[Q_{n}]_{+}\phi_{2,y} 
  \\ [\tilde{P_{n}}]_{+}\phi_{1,x}+[\tilde{Q_{n}}]_{+}\phi_{1,y}&[\tilde{P_{n}}]_{+}\phi_{2,x}+[\tilde{Q_{n}}]_{+}\phi_{2,y}\end{vmatrix}
\\&=(det^{-1}J)\begin{vmatrix}[P_{n}]_{+}& [Q_{n}]_{+} \\ [\tilde{P_{n}}]_{+}&[\tilde{Q_{n}}]_{+}\end{vmatrix}.
\begin{vmatrix}
  \phi_{1,x}&\phi_{2,x}\\
  \phi_{1,y}&\phi_{2,y}
  \end{vmatrix}
=\begin{vmatrix}[P_{n}]_{+}& [Q_{n}]_{+} \\ [\tilde{P_{n}}]_{+}&[\tilde{Q_{n}}]_{+}\end{vmatrix},
\end{aligned}
$$
so 
$$\left((det^{-1}J)\frac{\lambda \partial(\phi_{1},\phi_{2})}{\partial(t_{n},z_{j})}\right)_{-}=0.$$

\bigskip
\section{\sc \bf The nonlinear Riemann-Hilbert problem of modified heavenly equation }
Due to the fact that the space of the eigenfunctions of the vector fields is a ring,  the inverse problem is essentially nonlinear and this nonlinear problem  can be formulated as a nonlinear Riemann-Hilbert problem
on a suitable contour of the complex plane of the spectral parameter.
In the field related to dispersionless integrable equations, there exists novel IST named Manakov-Santini method.
This method can be applied to study the long-time behavior of dispersionless integrable equations.
 Linking dispersionless integrable equations to the nonlinear Riemann-Hilbert problem is the first step in applying the IST.
We first consider
the vector nonlinear Riemann-Hilbert problem on the real line as
\begin{equation}
\begin{gathered}
  \begin{pmatrix}\phi^{+}_{1}(\lambda) \\ \phi^{+}_{2}(\lambda)
  \end{pmatrix}
  \end{gathered}
  =\begin{gathered}
    \begin{pmatrix}\phi^{-}_{1}(\lambda)+R_{1}(\phi^{-}_{1}(\lambda),\phi^{-}_{2}(\lambda),\lambda) \\ \phi^{-}_{2}(\lambda)+R_{2}(\phi^{-}_{1}(\lambda),\phi^{-}_{2}(\lambda),\lambda)
    \end{pmatrix}
    \end{gathered}
    =\begin{gathered}
      \begin{pmatrix}\mathcal{R}_{1} (\phi^{-}_{1}(\lambda),\phi^{-}_{2}(\lambda),\lambda)\\ \mathcal{R}_{2} (\phi^{-}_{1}(\lambda),\phi^{-}_{2}(\lambda),\lambda)
      \end{pmatrix}
      \end{gathered}
    ,\lambda\in \mathbb{R} ,
  \end{equation}
   where the solutions $\phi^{\pm}_{1},\phi^{\pm}_{2}$ are analytic in the upper and lower halves of the complex
   $\lambda$ plane, and they are normalized as 
   \begin{equation}
      \begin{gathered}
        \begin{pmatrix}\phi^{\pm}_{1}(\lambda) \\ \phi^{\pm}_{2}(\lambda)
        \end{pmatrix}
        \end{gathered}=
          \begin{gathered}
            \begin{pmatrix}-y+O   (\lambda^{-1}) \\ x+\lambda t+O   (\lambda^{-1})
            \end{pmatrix}
            \end{gathered},
            |\lambda|\gg 1,
   \end{equation}
 where $\overset{\rightharpoonup }{R}(\overset{\rightharpoonup }{\zeta},\lambda)=(R_{1}(\zeta_{1},\zeta_{2},\lambda),R_{2}(\zeta_{1},\zeta_{2},\lambda))^{T}$are
spectral data that defined for $\overset{\rightharpoonup }{\zeta}\in \mathbb{C} ^{2}$ and $\overset{\rightharpoonup }{\mathcal{R}}=\overset{\rightharpoonup }{\zeta}+(\overset{\rightharpoonup }{R}(\overset{\rightharpoonup }{\zeta},\lambda)).$
\begin{proposition}
Assuming that the above Riemann-Hilbert problem and its linearized form $\overset{\rightharpoonup }{\sigma }^{+}=J\overset{\rightharpoonup }{\sigma }^{-}$are uniquely solvable, where
$J$ is the Jacobian matrix of the transformation $(23)$ and $J_{ij}=\partial_{i} \mathcal{R}/\partial\zeta_{j} ,i,j=1,2$, the solutions $\phi^{\pm }(\lambda)$
of the Riemann-Hilbert problem $(23)$ are common eigenfunctions on the vector fields $(2)$ and 
\begin{align}
  &u_{x}=\lim_{\lambda\rightarrow \infty}(\lambda (\phi_{1}^{\pm}+y)),
  \\&u_{y}=\lim_{\lambda\rightarrow \infty}(\lambda (\phi_{2}^{\pm}-x-\lambda t)).
\end{align}
\end{proposition}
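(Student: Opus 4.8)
The plan is to run the Manakov--Santini argument: use that the spectral data $\overset{\rightharpoonup}{R}$ in $(23)$ carry no dependence on $x,y,z,t$, so that applying a coordinate vector field to the nonlinear jump relation produces a \emph{linear} Riemann--Hilbert problem whose only solution decaying at infinity is zero. First I would extract the potential from the normalization $(24)$: the expansions $\phi_1^{\pm}=-y+u_x\lambda^{-1}+O(\lambda^{-2})$ and $\phi_2^{\pm}=x+\lambda t+u_y\lambda^{-1}+O(\lambda^{-2})$ are exactly the content of the recovery formulas $(25)$ and $(26)$, and since $\vec\phi^{+}$ and $\vec\phi^{-}$ share the same asymptotic series at infinity they produce the same $u_x,u_y$; differentiating these supplies the coefficients $u_{xx},u_{xy},u_{xz},u_{yz}$ occurring in the vector fields $(2)$.

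Next I would apply each operator $L\in\{T,Z\}$ to the boundary relation $\vec\phi^{+}=\overset{\rightharpoonup}{\mathcal{R}}(\vec\phi^{-},\lambda)$. Because $\overset{\rightharpoonup}{\mathcal{R}}$ depends on the coordinates only through its argument $\vec\phi^{-}$, because the coefficients of $L$ are space-time functions common to both half-planes, and because neither $T$ nor $Z$ differentiates the spectral parameter $\lambda$, the chain rule gives $L\vec\phi^{+}=J\,(L\vec\phi^{-})$ with $J_{ij}=\partial\mathcal{R}_i/\partial\zeta_j$, the Jacobian named in the statement. Hence $\vec\sigma^{\pm}:=L\vec\phi^{\pm}$ solves the linearized problem $\vec\sigma^{+}=J\vec\sigma^{-}$, with $\vec\sigma^{+}$ analytic in the upper and $\vec\sigma^{-}$ in the lower half-plane.

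The third step is the asymptotic analysis that fixes the shape of $L$. Inserting $(24)$ into $T\phi_1^{\pm}$ and $T\phi_2^{\pm}$, the $O(\lambda)$ term from $\partial_t\phi_2=\lambda+\cdots$ is removed by the $-\lambda\,\partial_x$ part of $T$, while the $O(1)$ terms cancel precisely when $u_{xx}=\partial_x u_x$ and $u_{xy}=\partial_x u_y$; the parallel computation for $Z$, in which $\lambda\partial_z$ lowers the order by one, cancels at $O(1)$ provided $u_{zx}=\partial_z u_x$ and $u_{yz}=\partial_z u_y$. In every case the remainder is $L\vec\phi^{\pm}=O(\lambda^{-1})$, a solution of the homogeneous linearized problem with vanishing normalization. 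By the assumed unique solvability of $\vec\sigma^{+}=J\vec\sigma^{-}$ this forces $T\vec\phi^{\pm}=Z\vec\phi^{\pm}=\overset{\rightharpoonup}{0}$, which is the claimed common-eigenfunction property, with $u$ reconstructed through $(25)$ and $(26)$.

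I expect the genuine obstacle to lie not in these manipulations but in showing that the quantities read off from $(24)$ assemble into a \emph{single} potential $u$: the fields $(2)$ are written with $u_{xy}$ as a true mixed second derivative, so one must verify the closedness relation $\partial_y u_x=\partial_x u_y$ for the data produced by the Riemann--Hilbert problem, after which a function $u$ with the recovered first derivatives exists and the remaining coefficients $u_{xx},u_{zx},u_{yz}$ are unambiguous. This is where the heavenly constraint $\det J=1$ of Section 2 enters: it is exactly the condition making the $1$-form recovered from $\vec\phi$ closed, so that $T$ and $Z$ are well defined and genuinely share an eigenfunction space. I would settle this consistency first and only then invoke uniqueness; the remaining analytic point, that the decaying solution of the linear problem must be the zero solution in the chosen function class, is supplied directly by the proposition's hypothesis.
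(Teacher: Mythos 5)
Your proposal is correct and follows essentially the same route as the paper: the paper's own argument (embedded in the proof block following Proposition 5.2) applies $T,Z$ to the jump relation (23), uses the chain rule to conclude that $T\vec{\phi}^{\pm}, Z\vec{\phi}^{\pm}$ solve the linearized problem $\vec{\sigma}^{+}=J\vec{\sigma}^{-}$, checks via the normalization (24) and the recovery formulas (25)--(26) that these solutions are $O(\lambda^{-1})$ at infinity, and invokes the assumed unique solvability to conclude they vanish. Your closing observation is also well taken: the paper never addresses whether the two recovered functions $u_x,u_y$ are derivatives of a single potential $u$ --- that closedness condition $\partial_y u_x=\partial_x u_y$, which is exactly the order-$\lambda^{-1}$ content of $\{\phi_1^{\pm},\phi_2^{\pm}\}=1$, indeed enters only through the heavenly constraint (28) imposed later in Proposition 5.2, so the paper implicitly glosses over the point you flag.
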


\begin{proposition}
 If the spectral data $\overset{\rightharpoonup }{\mathcal{R}}(\overset{\rightharpoonup }{\zeta},\lambda)$ satisfies the reality constraint as
\begin{equation}
  \overset{\rightharpoonup }{\mathcal{R}}(\overline{\overset{\rightharpoonup }{\mathcal{R}}(\overline{\overset{\rightharpoonup }{\zeta}, }\lambda)},\lambda )=\overset{\rightharpoonup }{\zeta},
\end{equation}
the solution u is real and satisfies the heavenly constraint as
\begin{equation} 
  \{ \mathcal{R}_{1},\mathcal{R}_{2}\} _{(\zeta_{1},\zeta_{2})}=1,
\end{equation}where $\{\cdot , \cdot \} $ is the Poisson bracket and $\overline{\overset{\rightharpoonup }{\zeta}}$ is conjugate of $\overset{\rightharpoonup }{\zeta}$.
\end{proposition}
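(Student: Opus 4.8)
The plan is to treat the two assertions separately: first the reality of $u$, which follows from a Schwarz-reflection argument combined with the uniqueness hypothesis of Proposition 5.1, and then the heavenly constraint, which identifies the canonical normalization with the relation $\{\phi_1,\phi_2\}_{(x,y)}=1$ underlying equation (1).

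For reality, I would introduce the reflected pair $\hat\phi^{\pm}(\lambda):=\overline{\phi^{\mp}(\bar\lambda)}$. Since $\phi^{+}$ (resp. $\phi^{-}$) is analytic in the upper (resp. lower) half-plane, $\hat\phi^{+}$ is analytic in the upper half-plane and $\hat\phi^{-}$ in the lower one. On the real axis $\lambda=\bar\lambda$, so conjugating the jump relation (23), $\phi^{+}=\overset{\rightharpoonup}{\mathcal{R}}(\phi^{-},\lambda)$, and feeding the result through the reality constraint (28) shows that $(\hat\phi^{+},\hat\phi^{-})$ satisfies exactly the same nonlinear Riemann–Hilbert problem. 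Because $x,y,t$ are real, the asymptotics (24) are preserved under $\lambda\mapsto\bar\lambda$ followed by conjugation, so $\hat\phi^{\pm}$ obey the same normalization. The uniqueness assumed in Proposition 5.1 then forces $\phi^{\pm}(\lambda)=\overline{\phi^{\mp}(\bar\lambda)}$. Inserting this symmetry into the reconstruction formulas (25)–(26) and substituting $\mu=\bar\lambda$ gives $\overline{u_x}=u_x$ and $\overline{u_y}=u_y$, that is, $u$ is real.

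For the heavenly constraint, I would study $W^{\pm}(\lambda):=\{\phi_1^{\pm},\phi_2^{\pm}\}_{(x,y)}=\phi_{1,x}^{\pm}\phi_{2,y}^{\pm}-\phi_{1,y}^{\pm}\phi_{2,x}^{\pm}$. Each $W^{\pm}$ is analytic in its half-plane, being built from analytic eigenfunctions, and the normalization (24) gives $W^{\pm}=1+O(\lambda^{-1})$ as $\lambda\to\infty$, exactly as in the Section 2 computation $\det J=1$. Differentiating the jump relation $\phi_i^{+}=\mathcal{R}_i(\phi_1^{-},\phi_2^{-},\lambda)$ in $x$ and in $y$ and applying the chain rule yields, on $\mathbb{R}$, the identity $W^{+}=\{\mathcal{R}_1,\mathcal{R}_2\}_{(\zeta_1,\zeta_2)}\,W^{-}$. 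Hence the heavenly constraint (29) is precisely the condition that the jump of $W$ across the contour be trivial; once $W^{+}=W^{-}$ on $\mathbb{R}$, the two pieces glue to a single bounded entire function with limit $1$, and Liouville's theorem forces $W\equiv1$, i.e. $\{\phi_1,\phi_2\}_{(x,y)}=1$, which is the reduction making $u$ solve (1).

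The main obstacle I anticipate is the heavenly part rather than the reality part. Verifying that $\hat\phi^{\pm}$ solve the same problem is essentially bookkeeping once (28) is in hand, but the constraint (29) genuinely requires the analyticity and the $O(\lambda^{-1})$ decay of $W^{\pm}$ together with the correct chain-rule identity $W^{+}=\{\mathcal{R}_1,\mathcal{R}_2\}\,W^{-}$; establishing these cleanly—so that Liouville applies and the jump vanishes exactly when (29) holds—is where the care is needed. I would also check that the reflection symmetry $\phi^{\pm}(\lambda)=\overline{\phi^{\mp}(\bar\lambda)}$ is compatible with $W^{+}(\lambda)=\overline{W^{-}(\bar\lambda)}$, so that the reality reduction and the heavenly reduction are mutually consistent on the contour; tying (29) to (28) in this way, rather than merely positing it, is the delicate point.
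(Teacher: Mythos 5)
Your proposal is correct and follows essentially the same route as the paper: reality is obtained by showing the conjugate-reflected pair solves the same Riemann--Hilbert problem with the same normalization and invoking the assumed uniqueness (the paper states this more tersely as a self-consistency check of $\overline{\phi^{+}}=\phi^{-}$ under the reality constraint), and the heavenly constraint is handled exactly as you describe, via the chain-rule identity $\{\phi_1^{+},\phi_2^{+}\}_{(x,y)}=\{\mathcal{R}_1,\mathcal{R}_2\}_{(\zeta_1,\zeta_2)}\{\phi_1^{-},\phi_2^{-}\}_{(x,y)}$, the normalization $1+O(\lambda^{-1})$, and analytic continuation across the contour (Liouville). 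Your write-up is in fact more careful than the paper's, particularly in making the $\lambda\mapsto\bar\lambda$ reflection explicit.
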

\begin{proof}
  We apply the operators $T,Z$ in (2) to the Riemann-Hilbert problem (23). According to chain rule, we can get
  $$T_{1}(\phi^{+}_{1})=\frac{\partial\mathcal{R}_{1}}{\partial \phi^{-}_{1}}\cdot T_{1}(\phi^{-}_{1})+\frac{\partial \mathcal{R}_{1}}{\partial \phi^{-}_{2}}\cdot T_{1}(\phi^{-}_{2})$$ and other similar examples.
  So $T_{1}(\phi^{\pm})$ and  $Z_{1}(\phi^{\pm})$ satisfy the linearized form
  $$
  \begin{gathered}
    \begin{pmatrix}T_{1}(\phi^{+}_{1}(\lambda)) &Z_{1}(\phi^{+}_{1}(\lambda))\\ T_{1}(\phi^{+}_{2}(\lambda)) &Z_{1}(\phi^{+}_{2}(\lambda))
    \end{pmatrix}
    \end{gathered}
    =\begin{gathered}
      \begin{pmatrix}\frac{\partial \mathcal{R}_{1}}{\partial \phi_{1}^{-}} &\frac{\partial \mathcal{R}_{1}}{\partial \phi_{2}^{-}} \\ \frac{\partial \mathcal{R}_{2}}{\partial \phi_{1}^{-}}&\frac{\partial \mathcal{R}_{2}}{\partial \phi_{2}^{-}}
      \end{pmatrix}
      \end{gathered}
      \begin{gathered}
        \begin{pmatrix}T_{1}(\phi^{-}_{1}(\lambda)) &Z_{1}(\phi^{-}_{1}(\lambda))\\ T_{1}(\phi^{-}_{2}(\lambda)) &Z_{1}(\phi^{-}_{2}(\lambda))
        \end{pmatrix}
        \end{gathered}.
      $$
    Apart from that, we can also get the identities (25) and (26). Therefore we can find that $T_{1}(\phi^{\pm})\to 0, Z_{1}(\phi^{\pm})\to 0$, as $\lambda \to \infty$, and by uniqueness we
    infer that $\vec{\phi}^{\pm}$ are eigenfunctions of the vector fields $T_{1},Z_{1}$. 
   Then we assume that $\overline{\overset{\rightharpoonup }{\phi^{+} }}=\overset{\rightharpoonup }{\phi^{-} }$ holds, if identity (27) is imposed and due to (23),
   $$ \overset{\rightharpoonup }{\mathcal{R}}(\overline{\overset{\rightharpoonup }{\mathcal{R}}(\overline{\overset{\rightharpoonup }{\phi^{+} }, }\lambda)},\lambda )=\overset{\rightharpoonup }{\mathcal{R}}(\overset{\rightharpoonup }{\phi^{+} }+\overline{\overset{\rightharpoonup }{\mathcal{R}}(\overset{\rightharpoonup }{\phi^{-} },\lambda)})=\overset{\rightharpoonup }{\phi^{+} },$$
   by uniqueness, this assumption is correct. In other words, the solution is reality.
   Finally, if the heavenly constraint (28) is imposed, the identities can be calculated as
   $$\{\phi_{1}^{+},\phi_{2}^{+} \}=\{ \mathcal{R}_{1},\mathcal{R}_{2}\}=\{ \phi_{1}^{-},\phi_{2}^{-} \},  $$
  and due to identity (24), we can get $$\{ \phi_{1}^{\pm},\phi_{2}^{\pm} \}=1, |\lambda| \to \infty.$$
   According to the analyticity of the eigenfunctions, we can get 
   $$\{\phi_{1}^{+},\phi_{2}^{+} \}=\{ \phi_{1}^{-},\phi_{2}^{-} \}=1.$$
\end{proof}

\bigskip
\bigskip
\textbf{Acknowledgements:} This work is supported by the National Natural Science Foundation of China under Grant Nos. 12271136, 12171133 and 12171132.

\bigskip
\bigskip

\end{document}